\theoremstyle{plain}% default
\newtheorem{thm}{Theorem}
\newtheorem{cor}{Corollary}
\newtheorem{prop}{Proposition}
\theoremstyle{definition}
\newtheorem{defn}{Definition}
\theoremstyle{remark}
\newtheorem{rem}{Remark}
\numberwithin{equation}{section}
\numberwithin{prop}{section}
\newcommand{\X}{\,\text{X}}
\newcommand{\Xe}{\,\emph{X}}
\newcommand{\pd}{\,\partial}
\newcommand{\al}{\,\alpha}
\newcommand{\eqal}[1]{\begin{equation}\begin{aligned}#1\end{aligned}\end{equation}}
\begin{document}

\title{Sub-symmetries II. Sub-symmetries and Conservation Laws.}

\author{V. Rosenhaus*}                    
\author{Ravi Shankar}

\affil{\footnotesize{* Department of Mathematics and Statistics, California State University, Chico, CA, USA, vrosenhaus@csuchico.edu}}
\affil{\footnotesize{Department of Mathematics, University of Washington, Seattle, WA, USA, 
shankarr@uw.edu}}
\date{}

\maketitle

\begin{abstract}
In our previous paper, the concept of sub-symmetry of a differential system was introduced, and its properties and some applications were studied. It was shown that sub-symmetries are important in decoupling a differential system, and in the deformation of a system's conservation laws, to a greater extent than regular symmetries. 

In this paper, we study the nature of a correspondence between sub-symmetries and conservation laws of a differential system. We show that for a large class of non-Lagrangian systems, there is a natural association between sub-symmetries and local conservation laws based on the Noether operator identity, and we prove an analogue of the first Noether Theorem for sub-symmetries. We also demonstrate that infinite conservation laws containing arbitrary functions of dependent variables can be generated by infinite sub-symmetries through the Noether operator identity.  

We discuss the application of sub-symmetries to the incompressible 
Euler equations of fluid dynamics. Despite the fact that infinite symmetries (with arbitrary functions of dependent variables) are not known for the Euler equations, we show that these equations possess infinite sub-symmetries. We calculate infinite sub-symmetries with arbitrary functions of dependent variables for the two- and three-dimensional Euler equations in the velocity and vorticity formulations with certain constraints. We demonstrate that these sub-symmetries generate known series of infinite conservation laws, and obtain new classes of infinite conservation laws.

\end{abstract}

\section{Introduction}

\smallskip

A concept of sub-symmetry was introduced in \cite{RS2016}. A sub-symmetry is a symmetry of a part of the system (its sub-system) on solutions of the entire system. In was shown in \cite{RS2016} that sub-symmetries can allow a decoupling of a sub-system from the rest of the system. Another attractive feature of sub-symmetries compared to regular symmetries is the ability of sub-symmetry transformations to generate larger sets of conservation laws as deformation of a given conservation law. For example, for the nonlinear telegraph system it was demonstrated in \cite{RS2016} that sub-symmetries can generate all lower conservation laws through the deformation of a known conservation law.

In this work, we study a mechanism of association between sub-symmetries and local conservation laws. Our approach is based on the Noether operator identity \cite{Rosen} that relates 
the infinitesimal transformation operator to the Euler and total derivative operators.  The Noether operator identity provides a Noether-type relation between symmetries and conservation laws not only for Lagrangian systems, see e.g. \cite{Olver}, but also for a large class of differential systems that are not known to have a well-defined variational functional, see \cite{Rosenhaus94}, \cite{Rosenhaus96}. In this paper, we extend this approach 
to sub-symmetries and show that the Noether operator identity provides a natural association between sub-symmetries of a differential system and its conservation laws. A similar association can be established for infinite sub-symmetries of the system. The main focus of this paper is the relation between infinite sub-symmetries and infinite conservation laws of a differential system, which we demonstrate 
for the Euler equations.
  
Many conservation laws for incompressible inviscid fluid flows described by the Euler equations are known, including some infinite-dimensional sets of conservation laws (infinite conservation laws) parametrized by arbitrary functions of dependent variables (velocities or vorticities). Among them are recently obtained classes of infinite conservation laws for the helically invariant Euler equations \cite{Kelbin}. 

Using a Hamiltonian formulation, some
known conservation laws have been associated with the symmetries of the Euler equations (linear momentum, angular momentum, and energy; see \cite{Olver82}).  
At the same time, it was shown that the infinite ``Casimir" conservation laws in 2D and the helicity conservation law in 3D can be explained, not by symmetries, but by a degeneracy of the Poisson bracket, \cite{Olver82}.  The nature of these conservation laws and their connection to the symmetry properties of the original system remained, essentially, unknown. The same conclusion is true for a number of recently obtained infinite series of conservation laws \cite{Kelbin}. To a large extent, it is, apparently, due to the absence of a well-defined Lagrangian function for the Euler equations.

\smallskip

In this paper, we study infinite-dimensional symmetries (sub-symmetries) and conservation laws of the Euler equations.  By infinite-dimensional (infinite), we mean those symmetries (sub-symmetries) and conservation laws that contain arbitrary functions of the dependent variables. 
We show that the existence of certain sub-symmetries of the Euler system provides an explanation of many known conservation laws of the Euler equations, including all infinite conservation laws with arbitrary functions of dependent variables.

\smallskip

Infinite symmetry algebras parametrized by arbitrary functions and their relationships to conservation laws have been described in the literature considerably less extensively than finite-dimensional Lie symmetry groups and their corresponding conservation laws. Most known results concern differential systems arising from some variational problem with a Lagrangian function. According to the Second Noether Theorem \cite{Noether} (see also \cite{Olver}), infinite variational symmetries with arbitrary functions of {\it all} independent variables lead to certain identity relationships between the equations of the original differential system and their derivatives (meaning that the original system is under-determined). Infinite variational symmetries with arbitrary functions of \textit{not all} independent variables were studied in \cite{Rosenhaus02} and were shown to lead to a finite number of essential (integral) local conservation laws. Each essential conservation law is determined by a specific type of boundary condition. Boundary conditions and finite sets of essential conservation laws were found for a number of physically interesting differential systems, e.g. \cite{Rosenhaus06a, Rosenhaus06b, Rosenhaus03a}. 

The situation with infinite symmetry algebras parametrized by arbitrary functions of \textit{dependent variables} was shown in \cite{Rosenhaus07}
to be radically different;  in this case, an infinite symmetry leads to an infinite set of essential conservation laws.
Known examples of this situation are linear equations,
equations of the Liouville type, see e.g \cite{Zhiber1979a, Zhiber2001} that can be integrated by the Darboux method (e.g. \cite{Sokolov1995, Juras1997}) and equations of the hydrodynamic type \cite{Sheftel2004,Grundland2000} that can be integrated by the generalized hodograph method. The general case of one scalar variational equation of the second order admitting infinite symmetries and infinite conservation laws with arbitrary functions of a dependent variable $u(t,x)$ and its first derivatives $u_t, u_x$ was analyzed in \cite{Rosenhaus07}, and an extension of this work to the case of systems of two variational equations was discussed in \cite{Rosenhaus09} (see \cite{Rosenhaus13} for a general approach and a connection to equations of the Liouville type). 

In the recent paper \cite{Rosenhaus15}, we studied 
differential systems that may not have well-defined Lagrangian functions (\emph{quasi-Noether} systems), but possess infinite symmetries involving arbitrary functions of \textit{all} independent variables.  It was shown that the approach based on the Noether operator identity allows for the derivation of an extension of the Second Noether theorem for non-Lagrangian differential systems, \cite{Rosenhaus15}.

\smallskip
\medskip

In this paper, we analyze infinite sub-symmetries and a corresponding infinite set of local conservation laws of a non-variational system. In Section \ref{sec:var}, we review the well-known correspondence between symmetries and conservation laws for variational systems from the standpoint of the Noether identity.  
We also discuss this correspondence for a more general class of differential systems (quasi-Noether) that may not arise from variational problems. We apply the approach based on the Noether identity and show how infinite symmetries of quasi-Noether systems give rise to infinite sets of conservation laws.  In Section \ref{sec:sub}, we formulate and prove a theorem relating sub-symmetries and conservation laws of quasi-Noether systems.  In Sections \ref{sec:2d}-\ref{sec:vort}, we apply this approach to the systems of Euler and constrained Euler equations and find infinite sub-symmetries and infinite conservation laws of these systems.  
In Section \ref{sec:2d}, we study an infinite sub-symmetry for the two-dimensional vorticity equations, and show how this infinite sub-symmetry generates the well-known Casimir invariants of 2D flow. In Section \ref{sec:euler}, we
study three-dimensional Euler systems with constraints and find general classes of infinite sub-symmetries and infinite conservation laws containing arbitrary functions of velocities. 
We generate new classes of infinite conservation laws  
and show that our results include recently found series of infinite conservation laws of \cite{Kelbin}.  In Section \ref{sec:vort}, we apply this methodology to the 3D Euler equations in vorticity form, construct 3D constrained vorticity systems admitting infinite sub-symmetries and conservation laws with arbitrary functions of vorticity components, obtain
the results reported in \cite{Kelbin}, and generate new classes of infinite conservation laws.

\section{Symmetries, conservation laws, and the Noether operator identity}
\label{sec:var}

Let us briefly outline the approach we follow; for details see \cite{Rosenhaus02,Rosenhaus07}.  By a conservation law of a differential system 
\begin{align}\label{sys}								
\Delta_v(x,u,u_{(1)},u_{(2)},\dots, u_{(l)})=0, \quad v=1,2,...,n.
\end{align}
we mean a divergence expression
\begin{align}\label{conser}
D_i K^i(x,u,u_{(1)},u_{(2)},\dots ) \doteq 0,
\end{align}
that vanishes on all solutions of the original system ($\doteq$).  Here, $x=(x^1,x^2,\dots,x^p)$ and $u=(u^1,u^2,\dots,u^q)$ are the tuples of 
independent and dependent variables, respectively; 
$u_{(r)}$ is the tuple of $r$th-order derivatives of $u, \:\:\:r=1,2,\dots$;
$\Delta_v$ and $K^i$ are differential functions, i.e. smooth functions of $x^i$, $u^a$ and a finite number of derivatives of $u$ (see \cite{Olver}); $i = 1,\dots, p$, $a=1,\dots,q$. $D_i$ is the total derivative with $x^i$:
\begin{align*}
D_i=\partial_i+u^a_i\partial_{u^a}+u^a_{ij}\partial_{u^a_{ij}}+\dots=\partial_i+u^a_{iJ}\partial_{u^a_J},\quad 1\le i\le p, \quad 1\le a\le q, 
\end{align*}
the sum extending over all (unordered) multi-indices $J=(j_1,j_2,...,j_k)$ for $k\ge 0$ and $1\le j_k\le p$, as well as over $1\le a\le q$. (For $k=0$ we set $u^a_J=u^a$ and $\mathrm D_J=1$).
\smallskip

Two conservation laws $K$ and $\tilde K$ are equivalent if they differ by a trivial conservation law \cite{Olver}.  A conservation law $D_i P^i \doteq 0 $ is trivial if a linear combination of two kinds of triviality is taking place: 1. The $p$-tuple $P $ vanishes on the solutions of the original system: $P^i \doteq 0$. 2. The divergence identity holds for any smooth 
$q$-tuple of functions $ u $ (e.g. $ \partial_t (\text{div} u) + \text{div}(-\partial_t u) = 0$ for $p=q$ and $u=u(t,x)$).
\medskip

Let smooth functions $u^a=u^a(x)$ be defined on a non-empty open subset $D$ of $p$-dimensional space.  
\smallskip

We start with the case when the system \eqref{sys} is of a variational problem with the action functional
\[
S ={\int_D} {L(x,u, u_{(1)}, \dots)\: d^{p}x}, 
\]											
where $L$ is 
the Lagrangian density. Then
\begin{equation}\label{Rosenhaus:equation2}	
E_a(L)\equiv \Delta_a(x,u, u_{(1)}, \dots) = 0,\quad 1\le a\le q,  
\quad (q=n),
\end{equation}										
where 
\begin{equation}\label{Rosenhaus:equation3}	
E_a 
= \frac{\partial }{\partial u^a} - \sum\limits_i {D_i \frac{\partial }{\partial u_i^a } }
+ \sum\limits_{i \leqslant j} {D_i } D_j \frac{\partial }{\partial u_{ij}^a } + \cdots 
=(-D)_J\frac{\partial}{\partial_{u^a_J}},
\end{equation}
is the $a$-th Euler--Lagrange operator (variational derivative), and $(-\mathrm D)_J$ is the adjoint operator: $(-D)_J=(-1)^kD_J=(-D_{j_1})(-D_{j_2})\cdot\cdot\cdot(-D_{j_k})$. 
The operator $E_a$, clearly, annihilates total divergences.  

Consider an infinitesimal transformation with the canonical infinitesimal operator
\begin{align}\label{Rosenhaus:equation4}	
X_\alpha = \alpha ^a &\frac{\partial }{\partial u^a} + \sum\limits_i {(D_i 
\alpha ^a)\frac{\partial }{\partial u_i^a }} + \sum\limits_{i \leqslant j} {(D_i } 
D_j \alpha ^a)\frac{\partial }{\partial u_{ij}^a } + \cdots\, = \, \mathrm (D_{J}{\alpha}^a) \partial_{u^a_J},
\end{align}
where 
$\alpha ^a = \alpha ^a (x,u, u_{(1)}, \dots$), and the sum is taken over all (unordered) multi-indices $J.$
The variation of the functional $S$ under the transformation with operator 
$X_\alpha $ is
\begin{equation}								\label{Rosenhaus:equation5}	
\delta S = \int_D{X_\alpha L  \,d^{p}x}\,.
\end{equation}										
If $X_\alpha $ is a variational (one-parameter) symmetry transformation then 
\begin{equation}								\label{Rosenhaus:equation6}	
X_\alpha L =  D_i M^i,
\end{equation}										
where $M^i=M^i (x,u,u_{(1)},\dots)$ are smooth functions of their arguments. 
The Noether operator identity \cite{Rosen} (see also, e.g. \cite{Ibragimov1985} 
or \cite{Rosenhaus94}) relates the operator $X_\alpha$ to $E_a$,
\begin{equation} 								\label{Rosenhaus:equation7}	
X_\alpha = \alpha^aE_a +{D_i R^{i} }, 		
\end{equation}										
\begin{equation}	\label{Rosenhaus:equation8}	
R^{i} = \alpha ^a\frac{\partial }{\partial u_i^a } + \left\{ 
{\sum\limits_{k \geqslant i} {\left( {D_k \alpha ^a} \right) - \alpha 
^a\sum\limits_{k \leqslant i} {D_k } } } \right\}\frac{\partial }{\partial 
u_{ik}^a }+ \cdots.									
\end{equation}
The expression for $R^{i}$ can be presented in a more general form \cite{Rosen,Lunev}:
\begin{align}\label{Ri}
R^i=(D_K\al^a)\,(-D)_J\pd_{u^a_{iJK}},
\end{align}
where $J$ and $K$ sum over multi-indices.

Applying the identity \eqref{Rosenhaus:equation7} with 
\eqref{Rosenhaus:equation8} to $L$ and using \eqref{Rosenhaus:equation6},	  
we obtain
\begin{equation}								\label{Rosenhaus:equation9}	
D_i (M^i - R^i L) = \alpha ^a\Delta_a, 
\end{equation}										
which on the solution manifold ($\Delta = 0$, $D_i \Delta = 0$, \dots ) 
\begin{equation}								\label{Rosenhaus:equation10}	
D_i (M^i - R^i L)\doteq 0,   
\end{equation}										
leads to the statement of the First Noether Theorem: any 
one-parameter variational symmetry transformation with infinitesimal 
operator $X_\alpha$ \eqref{Rosenhaus:equation4} gives 
rise to
the 
conservation law \eqref{Rosenhaus:equation10}.
\medskip

Note that Noether \cite{Noether} used the identity \eqref{Rosenhaus:equation9} and not the operator identity \eqref{Rosenhaus:equation7}.  The first mention of the Noether operator identity  \eqref{Rosenhaus:equation7}, to our knowledge, was made in \cite{Rosen}.
 
\smallskip

Consider now differential systems without well-defined Lagrangian  functions.

\smallskip

In \cite{Rosenhaus94} and \cite{Rosenhaus96}, 
an approach based on the Noether operator identity \eqref{Rosenhaus:equation7} was suggested to relate symmetries to
conservation laws for a large class of differential systems that may not 
have well-defined Lagrangian functions; in \cite{Rosenhaus15} these systems were called quasi-Noether. 

\smallskip

In the current paper, we discuss the application of the approach based on the Noether operator identity to sub-symmetries, and the 
association between sub-symmetries and conservation laws for quasi-Noether systems.

\smallskip

Let us start with the symmetries of system \eqref{sys}. Applying the Noether operator identity \eqref{Rosenhaus:equation7} 
to a linear combination of equations $\Delta_a$ with coefficients $\beta^a$, we obtain 
\begin{equation}								\label{Rosenhaus:equation12}	
X_\alpha {(\beta^v\Delta_v)}= \alpha^a E_a (\beta^v\Delta_v) +{D_{i} R^i(\beta^v\Delta_v)},
\qquad a =1, \dots q, \quad v=1, \dots, n, \quad i=1, \dots, p.
\end{equation}
If there exist coefficients 
$\beta^a$ such that (\cite{Rosenhaus96})
\begin{equation}\label{quasi-Noether}	
E_a (\beta^v\Delta_v) \doteq 0, \qquad a = 1, \dots q, \quad v=1, \dots, n,
\end{equation}
then each symmetry $X_\alpha$ will lead to a local conservation law 
\begin{equation}								\label{Rosenhaus:equation16}	
D_i R^i (\beta^v \Delta_v)\doteq 0, 
\end{equation}
for any differential system of class 
\eqref{quasi-Noether}. Thus, the condition \eqref{quasi-Noether} can be considered as defining quasi-Noether systems, see also \cite{Rosenhaus15}. 
A system \eqref{euler:sys} is \emph{quasi-Noether} if there exist functions (differential operators) $\beta^a$ such that the condition \eqref{quasi-Noether} is satisfied. In \cite{Rosenhaus94}, 
the quantity $\beta^v\Delta_v$ was referred to as an alternative Lagrangian.
 
Quasi-Noether systems are differential systems for which it is possible 
to associate local conservation laws to symmetries based on the Noether operator identity, see \cite{Rosenhaus15}. Quasi-Noether differential systems are rather general and include all systems possessing conservation laws. Quasi-Noether systems include all differential systems in the form of conservation laws, e.g. KdV, mKdV, Boussinesq, Kadomtsev-Petviashvili equations, nonlinear wave and heat equations, Euler equations, and Navier-Stokes equations; as well as the homogeneous Monge-Ampere equation, and its multi-dimensional analogue, see \cite{Rosenhaus94}. In \cite{Rosenhaus15}, the approach based on the Noether identity was applied to quasi-Noether systems possessing infinite symmetries involving arbitrary functions of all independent variables, in order to generate an extension of the Second Noether theorem for systems that may not have well-defined Lagrangian functions.
\smallskip

\medskip

Note that the condition \eqref{quasi-Noether} (for a system to be quasi-Noether and allow a Noether-type association between symmetries and conservation laws) was obtained earlier within an alternative approach based on the Lagrange identity, see \cite{Vladimirov1980}, \cite{Vinogradov}, \cite{Zharinov1986}, \cite{Caviglia1986}, \cite{Sarlet1987}, \cite{Lunev}.  

\smallskip
Note also that the condition used by the authors of the direct method \cite{AncoBluman1997} (which aims at the generation of conservation laws for a differential system without regard to its symmetries), and that of the nonlinear self-adjointness approach \cite{Ibragimov2011}
is equivalent to the quasi-Noether condition \eqref{quasi-Noether}.

\section{Sub-symmetries and conservation laws}
\label{sec:sub}

Consider an infinitesimal transformation with operator $\text{X}$ 
\begin{equation}		\label{non-canon_X}
\text{X}=\xi^i\partial_i+\phi^a\partial_{u^a}+ \ldots, 
\end{equation}
where 
$\text{X}$ 
is a correspondingly prolonged vector field, see e.g. \cite{Olver}.

In \cite{RS2016} a \textit{sub-symmetry} of the system was introduced as an infinitesimal transformation of a nonempty subset (sub-system) of the system that leaves the subset invariant on the solution set of the entire system. 

Let us illustrate our motivation to study conservation laws in connection with sub-symmetries with the following observation. For a system of PDE's \eqref{sys} any conservation law is determined by some differential combination of the original equations:
\begin{align}\label{combin}
D_iM_i = \Gamma_v \Delta_v, 
\end{align}
where $\Gamma_v ={\Gamma^J}_v D_J$ are differential operators. In a regular approach, conservation laws of 
a differential system are associated with the symmetries of this system. However, it seems natural to study the dependence 
of conservation laws \eqref{combin} on the invariance properties of the specific combination of the equations of the system $\Gamma_v \Delta_v$ rather than those of the whole system $\Delta_v = 0, \:\: v=1,...,n$.  Consider for example, the nonlinear Schr\"{o}dinger  (NLS) equations 
\begin{align} \label{NLSC}
i\psi_{t} + \psi_{xx}  -k \psi^2 \psi^{*}  =  0, \nonumber \\
-i\psi^{*}_{t} + \psi^{*}_{xx}  -k \psi {\psi^{*}}^2  =  0,
\end{align}

\noindent \negthinspace \negthinspace where $\psi=\psi(x,t)=u+iv$,  $\psi^{*}=\psi^{*}(x,t)=u-iv$,  
and $u=u(x,t)$, and $v=v(x,t)$ are
real-valued functions, and $k$ is a  real constant.   
\noindent The nonlinear Schr\"{o}dinger equations for functions $u$ and $v$ take the form 
\begin{align}\label{NLSR}
\Delta_1\:&=\:-v_{t} + u_{xx}  -k u (u^2 +v^2)  =  0, \nonumber \\
\Delta_2\:&=\: \: \: \,u_{t} + v_{xx}  -k v (u^2 +v^2) =  0.
\end{align}
\noindent For the NLS system \eqref{NLSR}, the following 
combination of original equations: $ -v\Delta_1 + u\Delta_2 $ is known to 
lead to a continuity equation (conservation law) on solutions ($ \: \doteq$)
\begin{equation}                                \label{6.13}
D_t (u^2 +v^2)+2D_x\left(uv_{x}-vu_{x}\right) \:  =\: -v\Delta_1 + u\Delta_2 \doteq \; 0.
\end{equation}
The corresponding integral conservation law (associated with appropriate
boundary conditions, see \cite{Rosenhaus16}) is a conservation of the number of particles (mass).
\smallskip
\begin{equation} \label{number_particles}
D_t \!\int {(u^2 +v^2) \: dx  } \doteq 0.
\end{equation}
The dilatation
\eqal{
\X=u\pd_u+v\pd_u,
}
is not a symmetry of system \eqref{NLSR}.  However, this operator is a symmetry of the sub-system $-v\Delta_1+u\Delta_2=0$, and it deforms equation combination \eqref{6.13} into a conservation law (itself)
\eqal{
\X(-v\Delta_1+u\Delta_2)=2(-v\Delta_1+u\Delta_2)\doteq 0.
}
Thus, in this case, the operator $\X$ that generated the conservation law is a \textit{sub-symmetry} of the system \eqref{NLSR}.  

For variational systems, all local conservation laws arise from symmetry deformations of the Lagrangian (variational symmetries, see \cite{Olver}).  For systems without a Lagrangian, conservation laws can instead arise from sub-symmetry deformations of some combinations of original equations.  As introduced in \cite{RS2016}, a symmetry of a combination of equations of the system (or sub-set of the system) is a sub-symmetry of the system.

\begin{defn}[Sub-symmetry]
A transformation with infinitesimal operator $\text{X}$ \label{X} is a \textit{sub-symmetry} of the system \eqref{sys} if, for some given non-zero linear differential operators $\Xi^{ib}$, there exist linear differential operators $\Lambda^{ib}$ with smooth coefficients such that following relationship holds for any functions $u$:
\begin{align}\label{sub_def}
\text{X}\left(\Xi^{iv}\Delta_{v}\right)=\Lambda^{iv}\Delta_v,\hspace{4 mm}i=1,2,...,m, \quad v=1,\ldots,n, \qquad 0<m \le n,
\end{align}
where $\Xi^{iv}=\Xi^{ivJ}[u]D_J$, \:\: $\Lambda^{iv}=\Lambda^{ivJ}[u]D_J, \quad D_J=D_{j1}D_{j2}\cdot\cdot\cdot D_{jk}$, and each $ D_{jr},  \:r=1,2,. . ., k$ is a total derivative operator. 
\end{defn}
Since a sub-symmetry $X$ of the system \eqref{sys} is defined for some particular sub-system $(\Xi^{iv}\Delta_{v})$, we say that the pair $(X,\, \Xi^{iv}\Delta_{v})$ is a sub-symmetry of the system \eqref{sys}.
\smallskip

Consider a canonical (evolutionary) operator corresponding to the operator \eqref{non-canon_X} 
\begin{align}
\hat{\text{X}}=\text{X}-\xi^iD_i=(\phi^a-u^a_i\xi^i)\partial_{u^a}\equiv\hat{\phi}^a\partial_{u^a}.
\end{align}
We will use a regular notation $X$ for a canonical operator. Its prolongation is
\begin{align}											\label{sub-symm}
\text{X}=D_J{\phi}^a\partial_{u^a_J},
\end{align}
where the sum is taken over all (unordered) multi-indices $J$.

The following theorem provides a correspondence between sub-symmetries of quasi-Noether 
differential systems \eqref{sys} and their local conservation laws. According to the
condition \eqref{quasi-Noether} for a system to be quasi-Noether, there exist differential functions $\Xi^v[u]$ such that on the solutions ($\Delta=0$)
\begin{align}\label{quasi-N}
E_a\left(\Xi^v\Delta_v\right) \doteq 0, \qquad 1\le a\le q,
\end{align}
(where the sum is taken over $1\le v \le n$). The condition \eqref{quasi-N} means that there exist differential functions $\Xi^v[u]$, and differential operators $\Gamma^{av}=\Gamma^{avJ}D_J$ 
such that the following relation holds: $E_a\left(\Xi^v\Delta_v\right)=\Gamma^{av}\Delta_v$  
for  $1\le a\le q$, where $p$ and $q$ are the number of independent, and dependent variables, respectively, and $n$ is the number of
equations in the system \eqref{sys}. 

\begin{thm}[Sub-symmetries and conservation laws]						\label{thm:subcon}
									
Let $\Delta$ be a quasi-Noether differential system \eqref{sys}: 
\begin{align}\label{euler_id}
&\hspace{10 mm}E_a\left(\Xi^v\Delta_v\right)=\Gamma^{av}\Delta_v,\hspace{10 mm}1\le a\le q.
\end{align}  
For any sub-symmetry $(X,\,\Xi^v\Delta_{v})$ of the system $\Delta$ 
\begin{align}\label{sym_id}
&\hspace{10 mm}{\text{\emph{X}}}\left(\Xi^v\Delta_v\right)=\Lambda^v\Delta_v,   \qquad \Lambda^{v}=\Lambda^{vJ}D_J,
\end{align}
there is a corresponding conservation law 
\begin{align}\label{cons_law}
&\hspace{10 mm}D_iR^i\left(\Xi^v\Delta_v\right)\doteq 0,
\end{align}
where $R^i$ is given by \emph{(\ref{Ri})}.
\end{thm}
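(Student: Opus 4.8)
The plan is to apply the Noether operator identity \eqref{Rosenhaus:equation7}–\eqref{Ri} to the ``alternative Lagrangian'' $\Xi^v\Delta_v$, in exact parallel with the derivation of the First Noether Theorem \eqref{Rosenhaus:equation6}–\eqref{Rosenhaus:equation10}, but with the variational-symmetry condition $X_\alpha L=D_iM^i$ replaced by the sub-symmetry condition \eqref{sym_id}, and the Euler--Lagrange relation $E_a(L)=\Delta_a$ replaced by the quasi-Noether relation \eqref{euler_id}. Since $\Xi^v\Delta_v=\Xi^{vJ}[u]D_J\Delta_v$ is a differential function and the operator $X$ in the theorem is canonical (evolutionary) with characteristic $\phi^a$, prolonged as in \eqref{sub-symm}, the operator identity \eqref{Rosenhaus:equation7} applied with $\alpha^a=\phi^a$ and evaluated on the differential function $\Xi^v\Delta_v$ gives, for arbitrary $u$,
\[
X\big(\Xi^v\Delta_v\big)=\phi^aE_a\big(\Xi^v\Delta_v\big)+D_iR^i\big(\Xi^v\Delta_v\big),
\]
where $R^i$ is the operator \eqref{Ri} built from $\phi^a$.

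The remaining step is purely algebraic. Substitute the two hypotheses of the theorem into this identity: the left-hand side equals $\Lambda^v\Delta_v$ by \eqref{sym_id}, and the first term on the right equals $\phi^a\Gamma^{av}\Delta_v$ by \eqref{euler_id}. Solving for the divergence term yields the strict identity
\[
D_iR^i\big(\Xi^v\Delta_v\big)=\big(\Lambda^v-\phi^a\Gamma^{av}\big)\Delta_v,
\]
valid for all $u$, whose right-hand side is a differential operator applied to the $\Delta_v$. Restricting to the solution manifold $\Delta=0,\ D_J\Delta=0$ annihilates the right-hand side and produces \eqref{cons_law}; since $R^i(\Xi^v\Delta_v)$ are differential functions, $D_iR^i(\Xi^v\Delta_v)\doteq0$ is a conservation law of \eqref{sys} in the sense of \eqref{conser}.

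There is no analytic difficulty here; the only points needing care are bookkeeping ones. First, one must observe that the operator identity \eqref{Rosenhaus:equation7}, stated for operators $X_\alpha$ with $\alpha^a=\alpha^a(x,u,u_{(1)},\dots)$ acting on a differential function, applies verbatim with $\Xi^v\Delta_v$ in place of $L$, since this quantity depends on $x$, $u$ and finitely many derivatives of $u$, so that the sums defining $X$, $E_a$ and $R^i$ all terminate. Second, should one wish to state the result starting from a non-canonical operator $X=\xi^i\partial_i+\phi^a\partial_{u^a}+\cdots$, one first passes to the canonical $\hat X=X-\xi^iD_i=\hat\phi^a\partial_{u^a}$ and notes that $\hat X(\Xi^v\Delta_v)=X(\Xi^v\Delta_v)-\xi^iD_i(\Xi^v\Delta_v)$ is again of the form $\hat\Lambda^v\Delta_v$, because $D_i(\Xi^v\Delta_v)=\big[(D_i\Xi^{vJ})D_J+\Xi^{vJ}D_iD_J\big]\Delta_v$ is itself a differential operator acting on $\Delta_v$; hence reducing to canonical form preserves the sub-symmetry pair $(X,\Xi^v\Delta_v)$. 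Beyond this, the proof is nothing more than the composition of three identities — the Noether operator identity, the quasi-Noether identity, and the sub-symmetry identity — followed by restriction to solutions, mirroring \eqref{Rosenhaus:equation9}–\eqref{Rosenhaus:equation10} step for step.
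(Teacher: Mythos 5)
Your proof is correct and follows exactly the paper's argument: apply the Noether operator identity to $\Xi^v\Delta_v$, substitute the quasi-Noether condition \eqref{euler_id} and the sub-symmetry condition \eqref{sym_id}, and restrict the resulting identity $D_iR^i(\Xi^v\Delta_v)=(\Lambda^v-\phi^a\Gamma^{av})\Delta_v$ to the solution manifold. The additional bookkeeping remarks (termination of the sums, reduction to canonical form) are sound but do not change the route.
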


\begin{proof}
By the Noether identity \eqref{Rosenhaus:equation7}, condition (\ref{sym_id}) can be 
rewritten as follows:
\begin{align}\label{proof1}
{\phi}^aE_a\left(\Xi^v\Delta_v\right)+D_iR_i\left(\Xi^v\Delta_v\right)=\Lambda^v\Delta_v.
\end{align}
Substituting condition (\ref{euler_id}) into \eqref{proof1} 
\begin{align*}
{\phi}^a\Gamma^{av}\Delta_v+D_iR_i\left(\Xi^v\Delta_v\right)=\Lambda^v\Delta_v,
\end{align*}
we obtain the following continuity equation:
\begin{align}\label{con}
D_iR^i\left(\Xi^v\Delta_v\right)=\left(\Lambda^v-{\phi}^a\Gamma^{av}\right)\Delta_v=0
\end{align}
when $\Delta=0$.
\end{proof}

\begin{rem}
This theorem is an analogue of the first Noether theorem for sub-symmetries. However, while nontrivial variational symmetries (whose coefficients $\phi^a$ do not vanish on solutions) always give rise to nontrivial conservation laws (the characteristics of variational symmetries are the same as the characteristics of the corresponding conservation laws, see \eqref{Rosenhaus:equation9}, see also \cite{Olver}), nontrivial sub-symmetries may lead to trivial conservation laws. Consider the differential operators from the RHS of \eqref{con}
\begin{equation}
\Lambda^v-{\phi}^a\Gamma^{av}\equiv \mu^v=\mu^{vJ}[u]D_J,
\end{equation} 
Then 
\begin{align}													
\mu^v \Delta_v = \mu^{vJ}D_{J}\Delta_v = 
D_{J}\left(\mu^{vJ} \Delta_v \right) + \left((-\mathrm D)_{J}\,\mu^{vJ}\right) \Delta_v.	
\end{align}
Since the first term in the RHS is a trivial conservation law of the first type, conservation law \eqref{con} takes the form:
\begin{align}
D_iR^i\left(\Xi^v\Delta_v\right)=\bar\mu^v\Delta_v
\end{align}
(up to trivial conservation laws), where each $\bar\mu^v$ is a differential function. Thus, a nontrivial sub-symmetry \eqref{sub-symm} leads to a trivial conservation law if
\begin{align}\label{nontriv:sub}
\bar\mu^v =\left((-\mathrm D)_{J}\mu^{vJ}\right)=0, \qquad v=1,...,n 
\end{align}
when $\Delta=0$.  Respectively, a nontrivial sub-symmetry \eqref{sub-symm} generates a nontrivial conservation 
law \eqref{con} if
\begin{align}
\bar\mu^v=(-D)_J\mu^{vJ}\ne 0, \qquad v=1,...,n 
\end{align}
when $\Delta=0$.  
\end{rem}

\begin{rem}
Condition (\ref{euler_id}) is always satisfied for conservation systems: $\Delta_v=D_i M_{iv}$ 
\cite{Rosenhaus94}.  Condition (\ref{sym_id}) requires that the transformation $\hat{\text{X}}$ 
be a sub-symmetry of the combination of the equations of the system $\Xi^v\Delta_v$ with 
respect to the whole system $\Delta$. This condition is always satisfied if $\hat{\text{X}}$ is 
a symmetry of $\Delta$.
\end{rem}

\begin{rem}
In addition to finite conservation laws corresponding to $n$-parameter sub-symmetries, the theorem also 
allows for the generation of infinite conservation laws corresponding to sub-symmetries with arbitrary functions of
dependent variables. In this case, we obtain infinite conservation laws with arbitrary functions of dependent variables. This situation is different from the case of the second Noether theorem which involves 
arbitrary functions of all independent variables, 
see \cite{Rosenhaus15}.
\end{rem}

\begin{rem}
We see that, for each sub-symmetry defined by condition (\ref{sym_id}) with corresponding 
differential functions $\Xi^v[u]$, there exists an associated conservation law (\ref{con}).  
These results are similar to those of \cite{Rosenhaus94} and \cite{Rosenhaus02}, 
which were obtained in the frame of symmetries. As noted above, 
see also \cite{RS2016}, the conditions for sub-symmetries are 
weaker than those for symmetries and do not require the invariance of all equations of the system.
\end{rem} 

A useful special case of Theorem \ref{thm:subcon} is related to vector field deformations of a conservation law discussed in \cite{RS2016} (Theorem 3): 
sub-symmetry deformations of a conservation law lead to further conservation laws.
If $\Xi^v\Delta_v=D_iM^i$ is a conservation law, and $(\Xe,\Xi^v\Delta_v)$ is a sub-symmetry of the system \eqref{sys},   
then $\Xe(\Xi^v\Delta_v)=D_i(\Xe\,M^i)$ is also a conservation law. We will present this result in a slightly different form:
\begin{cor}[Total divergences]										\label{cor:total_div}
Suppose the system \eqref{sys} has a conservation law $\Xi^v\Delta_v=D_iM^i$ with some smooth functions $M^i[u],\,\, 1\le i\le p$. 
Then for each sub-symmetry $(\Xe,\Xi^v\Delta_v)$,
the p-tuple $\tilde M^i=\text{\emph{X}}M^i$ 
is a conservation law of \eqref{sys} satisfying 
the following continuity equation:
\begin{align}\label{con_div}
D_i\tilde M^i=\Lambda^v\Delta_v=0
\end{align}
when $\Delta=0$.
\end{cor}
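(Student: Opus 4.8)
\textbf{Proof proposal for Corollary \ref{cor:total_div}.}

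The plan is to obtain the continuity equation \eqref{con_div} directly, by applying the prolonged evolutionary vector field $\text{X}$ to the divergence identity $\Xi^v\Delta_v = D_iM^i$ and exploiting the fact that $\text{X}$ commutes with total derivatives. First I would recall the standard commutation property of prolonged evolutionary vector fields with total derivative operators (see \cite{Olver}): for any differential function $F$ and any index $i$, one has $\text{X}(D_iF) = D_i(\text{X}F)$, where $\text{X}$ is the prolongation \eqref{sub-symm}. Applying $\text{X}$ to both sides of the hypothesis $\Xi^v\Delta_v = D_iM^i$ then gives, on the right-hand side, $\text{X}(D_iM^i) = D_i(\text{X}M^i) = D_i\tilde M^i$, with $\tilde M^i = \text{X}M^i$ a genuine differential function, since $\text{X}$ has differential-function coefficients and the $M^i$ are smooth of finite order.

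Next I would evaluate the left-hand side using the sub-symmetry hypothesis \eqref{sym_id}, namely $\text{X}(\Xi^v\Delta_v) = \Lambda^v\Delta_v$. Equating the two expressions yields the identity $D_i\tilde M^i = \Lambda^v\Delta_v$, valid for all $u$. Restricting to the solution manifold, where $\Delta = 0$ together with all its differential consequences, the right-hand side vanishes because $\Lambda^v = \Lambda^{vJ}D_J$ is a differential operator acting on the $\Delta_v$; hence $D_i\tilde M^i \doteq 0$, i.e. $\tilde M^i$ is a conservation law of \eqref{sys}, which is exactly \eqref{con_div}.

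It is worth remarking that the statement is also a degenerate instance of Theorem \ref{thm:subcon}: since $\Xi^v\Delta_v = D_iM^i$ is already a total divergence, the Euler operator annihilates it, $E_a(\Xi^v\Delta_v) = 0$, so \eqref{euler_id} holds with $\Gamma^{av} = 0$, and the $p$-tuple produced in the proof of Theorem \ref{thm:subcon}, namely $R^i(\Xi^v\Delta_v)$, has the same divergence as $\text{X}M^i$ by the Noether identity \eqref{Rosenhaus:equation7} applied to $\Xi^v\Delta_v$. One may present the corollary either way; the direct route above is shorter.

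I do not expect a real obstacle. The only points requiring a little care are the justification that $\text{X}$ and $D_i$ commute on differential functions — which legitimizes the manipulation \emph{before}, not merely after, restriction to solutions — and the observation that $\tilde M^i = \text{X}M^i$ is genuinely a differential function rather than a formal object, so that ``conservation law'' is meant in the sense of \eqref{conser}.
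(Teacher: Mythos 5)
Your proof is correct and follows essentially the same route as the paper's: both hinge on the commutation $\text{X}(D_iM^i)=D_i(\text{X}M^i)$ for the prolonged evolutionary field together with the sub-symmetry condition \eqref{sym_id}, and both note that $E_a(\Xi^v\Delta_v)=0$ because the combination is a total divergence. The only cosmetic difference is that the paper routes the left-hand side through the Noether operator identity \eqref{Rosenhaus:equation7} and equation \eqref{con} of Theorem \ref{thm:subcon} (thereby identifying $\tilde M^i$ with the current $R^i(\Xi^v\Delta_v)$), whereas you apply $\text{X}$ directly to the identity $\Xi^v\Delta_v=D_iM^i$ — a shortcut you yourself observe is equivalent to the degenerate case $\Gamma^{av}=0$ of the theorem.
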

\begin{proof}
Since $E_a\left(\Xi^v\Delta_v\right)= 0$ in this case, then according to \eqref{proof1}, we have
$\Gamma^{av}=0$.  By the Noether identity (\ref{Rosenhaus:equation7}), we have 
$\text{X}D_iM^i=D_iR^i(\Xi^v\Delta_v)$. However, $\text{X}D_iM^i=D_i\text{X}M^i$
\cite{Olver}, so (\ref{con_div}) follows from (\ref{con}).
\end{proof}

This result is used in the following sections. 
\smallskip

\section{Two-dimensional vorticity equations}
\label{sec:2d}
We consider the incompressible inviscid Euler equations in the vorticity formulation.  It is well known that two-dimensional Cartesian planar flow possesses infinite conservation laws.  For this type of flow, there exists an alternative Hamiltonian formulation of the vorticity equations \cite{Olver82}
where the infinite conservation laws arise due to a degeneracy in the Poisson bracket, and it was 
conjectured 
that there are no symmetries that give rise to the infinite conservation laws in two dimensions.  In Lagrangian coordinates, these conservation laws arise from a ``particle relabeling symmetry" \cite{Salmon88}, but this symmetry connection cannot be identified in Eulerian coordinates. The conserved densities in this theory (``Casimir invariants" or ``isovortical constraints") have important theoretical and practical applications; 
e.g., in conjunction with nonlinear stability analysis, they can be used to prove the formation of macro-structures in two-dimensional turbulent flows (see e.g., \cite{Bouchet}).   

In the paper, we study the relation between these infinite conservation laws and symmetry properties of the system. We show that the infinite conservation laws of the Euler equations in vorticity formulation arise from its infinite sub-symmetries. This is not the only approach to study conservation laws; we could find a Hamiltonian of a corresponding system, and look for its Casimir invariants, or try to develop a Lagrangian coordinate formalism for related differential system, or use the direct method for finding conservation laws for a fluid, see, e.g. \cite{Kelbin}, or \cite{Chevober}.
 
Our approach for finding sub-symmetries and corresponding local conservation laws is quite straightforward and similar to that for finding Lie symmetries. In addition, knowledge of the symmetry structure of a differential system allows one to find a good deal of valuable information about the system, including a geometric structure of the solution manifold, possible linearization, alternative Lagrangians, properties of solutions, etc.

\subsection{Infinite sub-symmetries and infinite conservation laws}
We write the system of Euler and vorticity equations in two dimensions in the form:
\begin{align}											\label{2d:sys}
\begin{split}
&\Delta_1=u^1_t+u^1u^1_1+u^2u^1_2+p_1=0,\\
&\Delta_2=u^2_t+u^1u^2_1+u^2u^2_2+p_2=0,\\
&\Delta_3=\omega_t+u^1\omega_1+u^2\omega_2=0,\\
&\Delta_4=u^1_1+u^2_2=0,\\
&\Delta_5=\omega-(u^2_1-u^1_2)=0.
\end{split}
\end{align}
Here, $\left(\, u^1(x^1,x^2, t), \,\, u^2(x^1,x^2,t),\,\, 0\right)$ are the components of the 2-D fluid velocity vector, $p=p(x^1,x^2,t)$  is the fluid pressure, and $\omega(x^1,x^2,t)=u^2_1-u^1_2$   is the two-dimensional 
vorticity; 
$\omega_t \, = \, \frac{\partial \omega}{\partial t}$,     
$\omega_i = \, \frac{\partial \omega}{\partial x^i}$,   $i=1,2.$ 
Note that for consistency, to the regular Euler equations for velocities we added the equation defining vorticity; this approach is similar to constructing the ideal of exterior forms characterizing the entire differential system, e.g. \cite{Estabrook}. 
 Note also that the system 
\eqref{2d:sys} is quasi-Noether. Indeed,
\begin{align}											\label{2d:quasi1}
\Delta_3+\omega\Delta_4 = D_t(\omega)+ D_1(u^1 \omega) + D_2(u^2 \omega),
\end{align}
and therefore $E_1(\Delta_3+\omega\Delta_4)= E_2(\Delta_3+\omega\Delta_4) =0$.
\begin{prop}											\label{2d:prop:subsym}
The two-dimensional vorticity system \emph{(\ref{2d:sys})}  
admits the following infinite sub-symmetry $(X, \Delta_3+\omega\Delta_4)$:
\begin{align}											\label{2d:subsym}
\text{\emph{X}}\left[\Delta_3+\omega\Delta_4\right]=0
\end{align}
when $\Delta=0$.  The sub-symmetry operator $\text{\emph{X}}$ is given by:
\begin{align}											\label{2d:op}
\text{\emph{X}}=f(\omega)\partial_\omega,
\end{align}
where $f$ is an arbitrary function of $\omega$.
\end{prop}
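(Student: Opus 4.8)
The plan is to establish \eqref{2d:subsym} by directly computing the action of the prolonged operator $\text{X}=f(\omega)\partial_\omega$ on the combination $\Delta_3+\omega\Delta_4$. First I would write this combination in expanded form,
\[
\Delta_3+\omega\Delta_4=\omega_t+u^1\omega_1+u^2\omega_2+\omega\,(u^1_1+u^2_2),
\]
and record that the canonical operator corresponding to \eqref{2d:op} has characteristic $\phi^{\omega}=f(\omega)$ and $\phi^{u^1}=\phi^{u^2}=\phi^{p}=0$. By the prolongation formula \eqref{sub-symm}, $\text{X}$ therefore acts trivially on $u^1,u^2,p$ and all of their derivatives, while $\text{X}\,\omega_J=D_J f(\omega)$ for each multi-index $J$; in particular $\text{X}\,\omega=f(\omega)$, $\text{X}\,\omega_t=f'(\omega)\,\omega_t$, and $\text{X}\,\omega_i=f'(\omega)\,\omega_i$ for $i=1,2$.

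Next I would apply $\text{X}$ to $\Delta_3+\omega\Delta_4$ term by term using the Leibniz rule. In the three terms of $\Delta_3$, only the $\omega$-derivative is affected, producing an overall factor $f'(\omega)$ and hence $f'(\omega)(\omega_t+u^1\omega_1+u^2\omega_2)=f'(\omega)\Delta_3$; in the two terms $\omega\,u^1_1$ and $\omega\,u^2_2$, only the explicit factor $\omega$ is affected, producing $f(\omega)(u^1_1+u^2_2)=f(\omega)\Delta_4$. Collecting these contributions gives
\[
\text{X}\left[\Delta_3+\omega\Delta_4\right]=f'(\omega)\,\Delta_3+f(\omega)\,\Delta_4,
\]
so that \eqref{sym_id} holds with $\Lambda^3=f'(\omega)$, $\Lambda^4=f(\omega)$, and $\Lambda^v=0$ for $v=1,2,5$. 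Since $\Delta_3=\Delta_4=0$ on the solution manifold, the right-hand side vanishes there, which is precisely \eqref{2d:subsym}. Equivalently, one may start from \eqref{2d:quasi1}, $\Delta_3+\omega\Delta_4=D_t\omega+D_1(u^1\omega)+D_2(u^2\omega)$, use the fact that $\text{X}$ commutes with the total derivatives \cite{Olver} to get $\text{X}\left[\Delta_3+\omega\Delta_4\right]=D_tf(\omega)+D_1(u^1f(\omega))+D_2(u^2f(\omega))$, and re-expand to recover $f'(\omega)\Delta_3+f(\omega)\Delta_4$.

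I expect no genuine obstacle: the computation is short and mechanical. The only things requiring a little care are the bookkeeping of the prolongation — confirming that $\text{X}$ touches only $\omega$ and its derivatives, not the velocity or pressure components — and the conceptual point that it is the arbitrariness of $f$ that makes this an \emph{infinite} sub-symmetry in the sense of Definition \ref{X}, even though $f(\omega)\partial_\omega$ is not a symmetry of the full system \eqref{2d:sys} (it does not leave $\Delta_1,\Delta_2,\Delta_5$ invariant). Once \eqref{2d:subsym} is proved, Theorem \ref{thm:subcon} applies with $\Gamma^{av}=0$ (by \eqref{2d:quasi1}), and via Corollary \ref{cor:total_div} it yields the associated infinite family of conservation laws $D_iR^i(\Delta_3+\omega\Delta_4)\doteq 0$, which is what the remainder of this section analyzes as the Casimir invariants of 2D flow.
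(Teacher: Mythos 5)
Your computation is correct and is essentially identical to the paper's proof: both verify $\text{X}\Delta_3=f'(\omega)\Delta_3$ and $\text{X}(\omega\Delta_4)=f(\omega)\Delta_4$ (with $\text{X}$ annihilating the velocity/pressure terms), obtaining $\text{X}[\Delta_3+\omega\Delta_4]=f'(\omega)\Delta_3+f(\omega)\Delta_4\doteq 0$. Your extra remarks on the prolongation bookkeeping and the alternative route via the divergence form \eqref{2d:quasi1} are consistent with how the paper then proves Proposition \ref{2d:prop:con}.
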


\begin{prop}											\label{2d:prop:con}
The two-dimensional vorticity system \emph{(\ref{2d:sys})} admits the following infinite 
conservation law:
\begin{align}											\label{2d:con}
D_tf(\omega)+D_j\left[u^jf(\omega)\right]=0
\end{align}
when $\Delta=0$, where $f$ is an arbitrary function of $\omega$.
\end{prop}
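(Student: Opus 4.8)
The plan is to obtain \eqref{2d:con} directly from Proposition~\ref{2d:prop:subsym} together with Corollary~\ref{cor:total_div}, exploiting the fact that the relevant equation combination is already a total divergence. By identity \eqref{2d:quasi1}, the combination $\Xi^v\Delta_v=\Delta_3+\omega\Delta_4$ equals $D_iM^i$ with $M^t=\omega$, $M^1=u^1\omega$, $M^2=u^2\omega$ (here $t$ is counted as one of the independent variables $x^i$), and moreover $E_a(\Delta_3+\omega\Delta_4)=0$; hence this combination satisfies the hypotheses of Corollary~\ref{cor:total_div}. Since Proposition~\ref{2d:prop:subsym} exhibits $(X,\,\Delta_3+\omega\Delta_4)$, with $X=f(\omega)\partial_\omega$, as a sub-symmetry of \eqref{2d:sys}, the Corollary applies and the $p$-tuple $\tilde M^i=XM^i$ is a conservation law of \eqref{2d:sys} obeying $D_i\tilde M^i=\Lambda^v\Delta_v=0$ when $\Delta=0$.

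The remaining step is the routine computation of $\tilde M^i=XM^i$. The canonical operator $X=f(\omega)\partial_\omega$ has prolongation $D_Jf(\omega)\,\partial_{\omega_J}$; its only zeroth-order component is along $\partial_\omega$, and it carries no component along $u^1,u^2$ or $p$. Since $M^t=\omega$ and $M^j=u^j\omega$ involve no derivatives of $\omega$, we get $\tilde M^t=X\omega=f(\omega)$ and $\tilde M^j=X(u^j\omega)=u^j\,X\omega=u^jf(\omega)$ for $j=1,2$. Substituting into $D_i\tilde M^i=0$ gives precisely $D_tf(\omega)+D_j[u^jf(\omega)]=0$ when $\Delta=0$, which is \eqref{2d:con}. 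As an internal check, one can instead invoke Theorem~\ref{thm:subcon} with characteristic $\phi^\omega=f(\omega)$: formula \eqref{Ri} then yields $R^t=f(\omega)\,\partial_{\omega_t}(\Delta_3+\omega\Delta_4)=f(\omega)$ and $R^j=f(\omega)\,\partial_{\omega_j}(\Delta_3+\omega\Delta_4)=u^jf(\omega)$, the higher multi-index contributions vanishing because $\Delta_3+\omega\Delta_4$ is affine in the first derivatives of $\omega$ and contains none of its higher derivatives, so that $D_iR^i(\Xi^v\Delta_v)=D_tf(\omega)+D_j[u^jf(\omega)]$ as before.

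There is no genuinely hard step in this proof: the substance is entirely supplied by Proposition~\ref{2d:prop:subsym}, whose verification (checking $X[\Delta_3+\omega\Delta_4]\doteq 0$) is where the real work resides. The only points needing a little care are bookkeeping ones: that $M^j=u^j\omega$ is free of $\omega$-derivatives so that $X$ passes through the $u^j$ factor untouched, and that the divergence $D_i\tilde M^i$ in Corollary~\ref{cor:total_div} is understood to include the $D_t$ term. It is worth adding, following the Remark after Theorem~\ref{thm:subcon}, that the conservation law \eqref{2d:con} is \emph{nontrivial} for generic $f$: the density $f(\omega)$ does not vanish on solutions, so the infinite sub-symmetry \eqref{2d:op} generates a genuine infinite family of conserved densities, namely the Casimir invariants of planar flow.
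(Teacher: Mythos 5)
Your proposal is correct and follows essentially the same route as the paper: observe that $\Delta_3+\omega\Delta_4=D_t\omega+D_j(u^j\omega)$ is a total divergence, apply Corollary~\ref{cor:total_div} with the sub-symmetry $(X,\Delta_3+\omega\Delta_4)$ of Proposition~\ref{2d:prop:subsym}, and compute $XM^t=f(\omega)$, $XM^j=u^jf(\omega)$ to obtain \eqref{2d:con}. Your additional cross-check via the Noether operator $R^i$ and the remark on nontriviality are consistent with the paper but not needed for the argument.
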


Both Propositions are quite transparent.

\subsection{Proofs}
\begin{proof}[Proof of Proposition \emph{\ref{2d:prop:subsym}}]
The following operator identities can be easily verified using (\ref{2d:op}):
\begin{align*}
\text{X}\Delta_3=f'(\omega)\Delta_3,\:\:\:\:\:
\text{X}\Delta_1=\text{X}\Delta_2=\text{X}\Delta_4\,=\,0.
\end{align*}
Therefore 
\begin{align*}
\text{X}\left[\Delta_3+\omega\Delta_4\right]=f'(\omega)\Delta_3+f(\omega)\Delta_4=0
\end{align*}
when $\Delta=0$.  This establishes (\ref{2d:subsym}).
\end{proof}

Observe that $\text{X}\Delta_5=f(\omega)\neq 0$ when $\Delta=0$.  Thus, the operator (\ref{2d:op}) is \textit{not a symmetry} of the Euler/vorticity system.  However, it is a sub-symmetry of this system, which allows the application of Theorem \ref{thm:subcon}.

\begin{proof}[Proof of Proposition \emph{\ref{2d:prop:con}}]
Observe that the following equation is a total divergence:
\begin{align}											\label{2d:div}
\Delta_3+\omega\Delta_4= 
D_t\omega+D_j(u^j\omega).
\end{align}
Using Corollary \ref{cor:total_div} and expressions \eqref{2d:div} and (\ref{con_div}), 
we obtain
\begin{align}											\label{2d:Xdiv}
\text{X}\left(\Delta_3+\omega\Delta_4\right) = 
\text{X}\left(D_t\omega+D_j(u^j\omega)\right)=
D_t\left(f(\omega)\right) + D_j\left(u^j f(\omega)\right).
\end{align}
\noindent According to Proposition \ref{2d:prop:subsym}, the LHS of \eqref{2d:Xdiv} 
vanishes on $\Delta=0$, and we obtain the conservation laws (\ref{2d:con}).
\end{proof}

\noindent In this example, we were able to generate a conservation law of the system 
\eqref{2d:sys} directly from its sub-symmetry \eqref{2d:subsym} using the facts
that it is a quasi-Noether system
and
that the combination 
$\Delta_3+\omega\Delta_4$ is a total divergence. 

We have demonstrated that the Euler equations of two-dimensional incompressible flow possess an infinite sub-symmetry with an arbitrary function of the dependent variable (vorticity), and that this sub-symmetry leads to the well-known infinite series of conservation laws. Physically, this symmetry might be interpreted as resulting from the $z$ direction being a so-called ``invariant direction" corresponding to $\vec\omega=\omega\hat{z}$.  Projected along this direction, the vorticity equation becomes linear and homogeneous in first derivatives, which results in the existence of infinite sub-symmetries and conservation laws.

\section{Constrained Euler equations}
\label{sec:euler}
The constrained three-dimensional Euler equations, in general, involve a constraint on the components of the velocity vector, for example, $\partial_zu^z=0$, where $u^z$ is the $z$ component of the velocity vector $\vec u$.  The crucial distinction between this type of flow and the two-dimensional flows of Section \ref{sec:2d} is that the constrained velocity component $u^z$
here
is nonzero, so the system still has three degrees of freedom;  see \cite{Niet} for examples of such systems and a numerical solution for an azimuthally constrained case.  Some other physically relevant examples occur in helically invariant flows,
considered in \cite{Kelbin}.

Infinite conservation laws analogous to the Casimir invariants of the vorticity equation in two-dimensional flow can also be found for the three-dimensional Euler equations with constraints.  In this section, we consider only conservation laws involving velocities; the vorticity equations are considered in a subsequent section.  Numerous Casimir-like conservation laws were discovered in \cite{Kelbin} for the helically invariant Euler equations. Our results extend theirs in several ways.  First, we obtain general results in vector form without introducing a special helical coordinate system. 
Second, we show that the conservation laws arise from specific sub-symmetries of the constrained Eulerian system and not from a ``relabeling symmetry" (that only holds in Lagrangian coordinates). 
Third, we generate several new families of infinite conservation laws and classify the types of systems that admit such laws.

\subsection{Infinite sub-symmetries and infinite conservation laws}
Consider the constrained Euler system $\Delta_v,\,1\le v\le 6$:
\begin{align}													\label{euler:sys}
\begin{split}
&\Delta_i=u^i_t+u^ju^i_j+p_i=0,\hspace{4 mm}i=1,2,3,\\
&\Delta_4=u^i_i=0,\\
&\Delta_5=a^ip_i=0,\\
&\Delta_6=b^j\partial_j\left(a^iu^i\right)=0,
\end{split}
\end{align}
where $\vec a=\vec\alpha+\vec\beta\times\vec r$,
with constant and otherwise arbitrary vectors $\vec\alpha$ and $\vec\beta$, and $\vec b=\vec a/|\vec a|^2+\vec a\times\vec B$, where $\vec B(t,\vec r)$ is an arbitrary solution of $\vec\nabla\cdot(\vec a\times\vec B)=0$.  Here, $u^i(t,\vec r),\, i=1,2,3$ is the velocity vector, and $p(t,\vec r)$ is the pressure.

\begin{prop}\label{euler:prop:subsym}
The constrained Euler system \emph{(\ref{euler:sys})}  admits the following infinite sub-symmetry:
\begin{align}\label{euler:subsym1}
&\text{\emph{X}}\left[a^i\Delta_i+\left(a^iu^i\right)\Delta_4-\Delta_5\right]=0
\end{align}
when $\Delta=0$. The operator $\text{\emph{X}}$ is given by:
\begin{align}													\label{euler:op}
\text{\emph{X}}=f\left(\vec a\cdot\vec u\right)\,b^i\partial_{u^i},
\end{align}
where $f$ is an arbitrary function.
\end{prop}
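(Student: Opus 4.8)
The plan is to use the fact that the combination singled out in the proposition is, up to the pressure, a space-time total divergence, so that applying the evolutionary field $\text{X}$ and commuting it past the total derivatives reduces the claim to a one-line computation on the solution manifold. Throughout write $g:=\vec a\cdot\vec u=a^iu^i$. First I would record the elementary facts about the coefficient fields that make everything work: $\partial_j a^i=-\partial_i a^j$ (equivalently $\nabla\vec a$ is skew, which is the content of $\vec a=\vec\alpha+\vec\beta\times\vec r$), whence $\nabla\cdot\vec a=0$, $(\partial_j a^i)u^iu^j=0$, and $a^ja^k\partial_j a^k=0$; next $\vec a\cdot\vec b=1$, since $\vec a\cdot(\vec a/|\vec a|^2)=1$ and $\vec a\cdot(\vec a\times\vec B)=0$; and finally $\nabla\cdot\vec b=0$, because $\nabla\cdot(\vec a/|\vec a|^2)=|\vec a|^{-2}\nabla\cdot\vec a-2|\vec a|^{-4}a^ja^k\partial_j a^k=0$, while $\nabla\cdot(\vec a\times\vec B)=0$ is the defining constraint on $\vec B$.

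Next I would rewrite the sub-system. The pressure terms $a^ip_i$ inside $a^i\Delta_i$ cancel $-\Delta_5$, leaving $a^i\Delta_i+(a^iu^i)\Delta_4-\Delta_5=a^iu^i_t+a^iu^ju^i_j+(a^iu^i)u^j_j$. Since $a^iu^i_t=D_t(a^iu^i)$ and $D_j(a^iu^iu^j)=(\partial_j a^i)u^iu^j+a^iu^ju^i_j+(a^iu^i)u^j_j$, dropping $(\partial_j a^i)u^iu^j=0$ gives the identity, valid for all $u$,
\begin{align*}
a^i\Delta_i+(a^iu^i)\Delta_4-\Delta_5=D_t(\vec a\cdot\vec u)+D_j\big((\vec a\cdot\vec u)\,u^j\big).
\end{align*}
Now apply $\text{X}$ and use that an evolutionary field commutes with total derivatives (as in the proof of Corollary~\ref{cor:total_div}). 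From $\text{X}u^i=f(g)b^i$ one gets $\text{X}(\vec a\cdot\vec u)=f(g)(\vec a\cdot\vec b)=f(g)$ and $\text{X}\big((\vec a\cdot\vec u)u^j\big)=f(g)\big(u^j+g\,b^j\big)$, so
\begin{align*}
\text{X}\!\left[a^i\Delta_i+(a^iu^i)\Delta_4-\Delta_5\right]=D_t f(g)+D_j\!\big[f(g)\,u^j\big]+D_j\!\big[f(g)\,g\,b^j\big].
\end{align*}

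It remains to expand this divergence and recognize it as a combination of the $\Delta_v$. I would use $D_tf(g)+D_j[f(g)u^j]=f'(g)\big(D_tg+u^jD_jg\big)+f(g)\,\Delta_4$ together with $D_tg+u^jD_jg=a^iu^i_t+a^iu^ju^i_j=a^i(\Delta_i-p_i)=a^i\Delta_i-\Delta_5$ (again dropping $(\partial_j a^i)u^iu^j$), and $D_j\big[f(g)g\,b^j\big]=\big(f(g)+gf'(g)\big)b^jD_jg+f(g)g\,(\nabla\cdot\vec b)=\big(f(g)+gf'(g)\big)\Delta_6$ (here $b^j$ is $u$-independent, so $D_jb^j=\nabla\cdot\vec b=0$, and $\Delta_6=b^jD_j(a^iu^i)$). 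Collecting,
\begin{align*}
\text{X}\!\left[a^i\Delta_i+(a^iu^i)\Delta_4-\Delta_5\right]=f'(g)\,a^i\Delta_i+f(g)\,\Delta_4+\big(f(g)+gf'(g)\big)\Delta_6-f'(g)\,\Delta_5,
\end{align*}
which is of the form $\Lambda^v\Delta_v$ and so vanishes when $\Delta=0$; this is the asserted sub-symmetry. (Feeding the total-divergence identity into Corollary~\ref{cor:total_div} then gives the companion infinite conservation law $D_t f(\vec a\cdot\vec u)+D_j\big[f(\vec a\cdot\vec u)\big(u^j+(\vec a\cdot\vec u)b^j\big)\big]\doteq0$.)

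The only real work is the geometric bookkeeping: the identities $\vec a\cdot\vec b=1$ and $\nabla\cdot\vec b=0$ are exactly what force the prescribed form $\vec b=\vec a/|\vec a|^2+\vec a\times\vec B$ with $\nabla\cdot(\vec a\times\vec B)=0$, and one must keep straight which cancellations hold identically in $u$ (the vanishing of $(\partial_j a^i)u^iu^j$, using only skew-symmetry of $\nabla\vec a$) and which hold only modulo $\Delta=0$. An equivalent but slightly longer route skips the total-divergence step and expands $\text{X}\big[a^i\Delta_i+(a^iu^i)\Delta_4-\Delta_5\big]$ directly by the Leibniz rule, using $\text{X}a^i=\text{X}b^i=\text{X}p=0$, $\text{X}\Delta_5=0$, $\text{X}\Delta_4=f'(g)\Delta_6$, and $a^i\,\text{X}\Delta_i=f'(g)(a^i\Delta_i-\Delta_5)+f(g)\Delta_6$; that version additionally needs the skew-symmetry identity $b^j(\partial_j a^i)u^i+u^jb^i(\partial_j a^i)=0$.
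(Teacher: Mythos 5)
Your proof is correct and follows essentially the same route as the paper: you rewrite $a^i\Delta_i+(a^iu^i)\Delta_4-\Delta_5$ as the total divergence $D_t(\vec a\cdot\vec u)+D_j((\vec a\cdot\vec u)u^j)$, push $\text{X}$ through the total derivatives, and re-express the result as $f'(a^i\Delta_i-\Delta_5)+f\Delta_4+(f+gf')\Delta_6$, which is exactly the paper's computation in \eqref{euler:action} with $b^j_j=0$. Your explicit bookkeeping of the identities $\vec a\cdot\vec b=1$, $\nabla\cdot\vec b=0$, and the skew-symmetry of $\nabla\vec a$ merely makes transparent what the paper leaves implicit.
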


\begin{prop}\label{euler:prop:con}
The constrained Euler system \emph{(\ref{euler:sys})} admits the following infinite conservation laws:
\begin{align}\label{euler:con1}
&D_tf\left(a^iu^i\right)+ D_j\left[u^jf\left(a^iu^i\right)\right]=0,
\end{align}
when $\Delta=0$, where $f$ is an arbitrary function.
\end{prop}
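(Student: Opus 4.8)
The plan is to follow the pattern of the proof of Proposition~\ref{2d:prop:con}, combining the sub-symmetry of Proposition~\ref{euler:prop:subsym} with Corollary~\ref{cor:total_div}. The first step is to recognize that the combination of equations appearing in \eqref{euler:subsym1} is, identically, a total divergence. Write $s:=a^iu^i$. After the pressure terms $a^ip_i$ cancel against $\Delta_5$, one has $a^i\Delta_i+(a^iu^i)\Delta_4-\Delta_5=a^iu^i_t+a^iu^ju^i_j+s\,u^j_j$, whereas $D_ts+D_j(u^js)=a^iu^i_t+u^iu^j(\partial_ja^i)+a^iu^ju^i_j+s\,u^j_j$. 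Since $\vec a=\vec\alpha+\vec\beta\times\vec r$ with $\vec\alpha,\vec\beta$ constant, $\partial_ja^i$ is antisymmetric in $i,j$, so the extra term $u^iu^j(\partial_ja^i)$ drops and
\begin{align*}
a^i\Delta_i+(a^iu^i)\Delta_4-\Delta_5=D_tM^0+D_jM^j,\qquad M^0=s,\quad M^j=u^js.
\end{align*}

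Next, because $(\text{\emph{X}},\,a^i\Delta_i+(a^iu^i)\Delta_4-\Delta_5)$ is a sub-symmetry, Corollary~\ref{cor:total_div} applies and gives $\text{\emph{X}}(D_tM^0+D_jM^j)=D_t(\text{\emph{X}}M^0)+D_j(\text{\emph{X}}M^j)$, which vanishes on $\Delta=0$. With $\text{\emph{X}}=f(s)\,b^i\partial_{u^i}$ I compute $\text{\emph{X}}M^0=\text{\emph{X}}s=f(s)\,b^ia^i=f(s)\,(\vec a\cdot\vec b)$, and the prescribed form $\vec b=\vec a/|\vec a|^2+\vec a\times\vec B$ gives $\vec a\cdot\vec b=1$, hence $\text{\emph{X}}M^0=f(s)$; likewise $\text{\emph{X}}M^j=(\text{\emph{X}}u^j)\,s+u^j(\text{\emph{X}}s)=f(s)\,b^js+u^jf(s)$. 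Therefore, on solutions,
\begin{align*}
D_tf(s)+D_j\!\left(u^jf(s)\right)+D_j\!\left(f(s)\,s\,b^j\right)=0.
\end{align*}

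The remaining, and principal, step is to show the spurious flux $D_j(f(s)\,s\,b^j)$ vanishes on solutions so that it can be discarded. Expanding with the product rule, $D_j(f(s)\,s\,b^j)=(f'(s)s+f(s))\,(b^jD_js)+f(s)\,s\,(\nabla\cdot\vec b)$. The factor $b^jD_js=b^j\partial_j(a^iu^i)$ is precisely $\Delta_6$, hence $\doteq0$; and $\nabla\cdot\vec b=0$ identically, since $\vec a\times\vec B$ is divergence-free by the hypothesis on $\vec B$, while $\nabla\cdot(\vec a/|\vec a|^2)=(\nabla\cdot\vec a)/|\vec a|^2-(\vec a\cdot\nabla|\vec a|^2)/|\vec a|^4=0$ because $\nabla\cdot\vec a=0$ and $\vec a\cdot\nabla|\vec a|^2=2a^la^i\partial_la^i=0$, again by the antisymmetry of $\partial_la^i$. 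Thus $D_j(f(s)\,s\,b^j)\doteq0$, leaving exactly \eqref{euler:con1}. The only delicate point is the bookkeeping of the two geometric identities $\vec a\cdot\vec b=1$ and $\nabla\cdot\vec b=0$ — this is where the specific structure of $\vec a$ and $\vec b$ in \eqref{euler:sys} is used — together with the identification of the leftover flux with $\Delta_6$; as an independent cross-check, substituting $\Delta_i=\Delta_4=\Delta_5=0$ directly into $D_tf(s)+D_j(u^jf(s))=f'(s)\,a^i(u^i_t+u^ju^i_j)+f(s)\,u^j_j=-f'(s)(a^ip_i)+f(s)\,u^j_j$ makes its vanishing on solutions manifest.
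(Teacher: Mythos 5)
Your proof is correct and follows essentially the same route as the paper's construction: you identify $a^i\Delta_i+(a^iu^i)\Delta_4-\Delta_5$ as the total divergence $D_ts+D_j(u^js)$, apply Corollary~\ref{cor:total_div} with the sub-symmetry of Proposition~\ref{euler:prop:subsym}, and discard the residual flux $D_j(f(s)\,s\,b^j)$ by recognizing it as a multiple of $\Delta_6$ plus a term proportional to $\nabla\cdot\vec b=0$. Your explicit verification that $\nabla\cdot(\vec a/|\vec a|^2)=0$ and the direct on-shell cross-check are welcome additions, but the argument is the same as in the paper.
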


\subsection{Construction}
To prove Propositions \ref{euler:prop:subsym}--\ref{euler:prop:con}, we start with the ``unconstrained" Euler system and derive constraints on the Euler equations from the condition of the existence of infinite sub-symmetries and conservation laws. The idea is to find some direction vector $\vec a(t,\vec r)$ along which the (nonlinear) Euler equations become linear in the ``projected variable" $v\equiv\vec a\cdot\vec u=a^iu^i$. The requirement for our system to admit the infinite sub-symmetry $v\to f(v)$ with canonical generator $\text{X}=f(v)\partial_v$ imposes some additional constraints on the vectors ${\vec a}$ and ${\vec b}$, and functions $u^i,\,p$ that define a new ``constrained system".

First, we establish the constrained system (\ref{euler:sys}).  Consider the ``unconstrained" three-dimensional Euler system:
\begin{align}\label{euler:sys2}
\begin{split}
&\Delta_i=u^i_t+u^ju^i_j+p_i=0,\hspace{4 mm}i=1,2,3,\\
&\Delta_4=u^i_i=0.
\end{split}
\end{align}
We examine the projection of $\Delta_i$ along some vector $a^i$ and rewrite this 
equation in terms of the ``projected variable" $v=a^iu^i$:
\begin{align}\label{euler:proj}
\begin{split}
&a^i\Delta_i=a^iu^i_t+u^ju^i_ja^i+a^ip_i=0\\
&\hspace{8 mm}=v_t+u^jv_j-u^i(a^i_t+u^ja^i_j)+a^ip_i.
\end{split}
\end{align}

In order to admit the infinite sub-symmetry $v\to f(v)$, this equation has to be linear and 
homogeneous in $v$. Requiring that the last three terms vanish, we obtain:
\begin{align}												\label{euler:constraint}
\begin{split}
&\Delta_5=a^ip_i=0,\\
&a^i_t=0,\hspace{4 mm}1\le i\le 3,\\
&u^ia^i_ju^j=0.
\end{split}
\end{align}
The solution of the last two conditions is: 
\begin{align}												\label{euler:vector_a}
\vec a=\vec\alpha+\vec\beta\times\vec r
\end{align}
with some constant vectors $\vec\alpha,\vec\beta$.  Note that our projected variable $v=\vec a\cdot\vec u$ is a linear combination of linear 
and angular momenta. Now, (\ref{euler:proj}) becomes:
\begin{align}												\label{euler:proj1}
a^i\Delta_i-\Delta_5=v_t+u^jv_j.
\end{align}
This equation is still nonlinear in $v$ ($v=a^iu^i$) and does not admit $v\to f(v)$.
Applying the sub-symmetry operator $\text{X}=f(v)\partial_v$ to this equation, we obtain:
\begin{align*}
\text{X}(a^i\Delta_i-\Delta_5)=\partial_tf+u^j\partial_jf+v_j\text{X}u^j=f'(v)(a^i\Delta_i-\Delta_5)+fv_j\partial_v u^j.
\end{align*}
Requiring that the last (nonlinear) term vanish, we get: 
\begin{align}\label{euler:cond}
\frac{\partial u^j}{\partial v}\frac{\partial v}{\partial x^j}=0.
\end{align}
Let us define an invertible transformation $(u^1,u^2,u^3)\to(v^1,v^2,v^3)$ under the assumption that $v^1(t,\vec r,\vec u)=v=a^iu^i$.  Since the transformation is invertible, we can write $u^i=U^i(t,\vec r,\vec v)$ for some differentiable functions $U^i$.  The constraint (\ref{euler:cond}) becomes:
\begin{align}\label{euler:cond2}
\frac{\partial U^j}{\partial v^1}\frac{\partial v^1}{\partial x^j}=0.
\end{align}
There are two conditions upon the functions $U^i(t,\vec r,\vec v)$: that their Jacobian be nonzero and finite, and that $a^iU^i=v^1$, or 
$a^i\partial_{v^1}U^i=1$.  A necessary condition for the former constraint is that $\partial_{v^1}U^i$ be nonzero for some $i$; all other partial derivatives of $U^i$ can be arbitrarily chosen (subject to second derivative compatibility and the nonsingularity of the Jacobian). Introducing 
$\vec b=\vec b(t,\vec r,\vec u)=\partial_{v^1}\vec U(t,\vec r,\vec v))$ we see that $\vec b$ is a differentiable vector that satisfies only two conditions: $|\vec b|^2>0$ and $\vec a\cdot\vec b=1$. The constraint \eqref{euler:cond} (\eqref{euler:cond2}) takes the form:
\begin{align}\label{euler:constraint2}
\Delta_6=b^j\partial_j(a^iu^i)=0,
\end{align}
where vector $\vec b$ can be represented as:
\begin{align*}
\vec b=\frac{1}{|\vec a|^2}\vec a+\vec a\times\vec B
\end{align*}
with an arbitrary vector $\vec B(t,\vec r)$ (we assumed for simplicity that $\partial_{u^i}\vec B\equiv 0$). 

In terms of $\vec b$, the sub-symmetry $\text{X}$ becomes (see \eqref{euler:op}):
\begin{align*}
\text{X}=f(v)\partial_v=f(v^1)\frac{\partial U^j}{\partial v^1}\partial_{u^j}=f(a^iu^i)b^j\partial_{u^j}.
\end{align*}
Now we have our constrained system: (\ref{euler:sys2}), (\ref{euler:constraint}), and (\ref{euler:constraint2}), 
and we can compute the action of the operator \eqref{euler:op} on the combination $(a^i\Delta_i-\Delta_5)$ using
\eqref{euler:proj1}:
\begin{align*}
\text{X}(a^i\Delta_i-\Delta_5)=\partial_tf+u^j\partial_jf+v_jb^jf=f'(a^i\Delta_i-\Delta_5)+f\Delta_6=0
\end{align*}
when $\Delta=0$.  We see that $\text{X}$ is a sub-symmetry of $\Delta$.  
However, according to \eqref{euler:proj1} this combination is not a total divergence. 
Consider instead
\begin{align}											\label{euler:combination}
a^i\Delta_i-\Delta_5+a^iu^i\Delta_4=v_t+\partial_j(u^jv).
\end{align}
and compute the action of $\text{X}$ on it:
\begin{align}\label{euler:action}
\begin{split}
&\text{X}(a^i\Delta_i-\Delta_5+a^iu^i\Delta_4)=D_tf + D_j(u^jf+vfb^j)\\
&\hspace{41 mm}=f'(a^i\Delta_i-\Delta_5)+f\Delta_4+(f+vf')\Delta_6+vfb^j_j.
\end{split}
\end{align}
Evidently, $\text{X}$ is a sub-symmetry if we choose vector $\vec b$ to be divergenceless: $b^j_j=0$. It follows that the vector $\vec B$ should satisfy $\vec\nabla\cdot(\vec a\times\vec B)=0$.  
A particular solution is $\vec B=\vec\beta\times\vec\gamma$, where $\vec\gamma(t)$ is arbitrary. For this $\vec b$, the operator $\text{X}$ is indeed a sub-symmetry of the constrained system \eqref{euler:sys}.  This establishes equation (\ref{euler:subsym1}) of Proposition \ref{euler:prop:subsym}.

Finally, we establish Proposition \ref{euler:prop:con}. Since equation \eqref{euler:combination} is a 
total divergence and admits an infinite sub-symmetry 
(\ref{euler:op}), we can use Corollary \ref{cor:total_div} to generate a corresponding infinite 
set of conservation laws. From (\ref{euler:action}), we obtain:
\begin{align}													\label{euler:con}
D_tf(a^iu^i)+D_j\left(u^jf(a^iu^i)+f(a^iu^i)a^ku^k b^j\right) = 0
\end{align}
when $\Delta=0$.
\noindent The last term in (\ref{euler:con}) can be presented as
\begin{align}
D_j(b^jg)=g'b^jD_j(a^iu^i)=f'\Delta_6 ,\qquad
g(x) = x f(x), \quad x= a^iu^i.
\end{align}
\noindent The remaining infinite conservation law has the form:
\begin{align*}													
D_tf(a^iu^i)+D_j\left(u^jf(a^iu^i)\right) = 0, 
\end{align*}
\noindent when $\Delta=0$ (function $f$ is arbitrary). This establishes equation (\ref{euler:con1}) of Proposition \ref{euler:prop:con}.

\subsection{Discussion}

According to Proposition \ref{euler:prop:con}, the constrained Euler system possesses an infinite set of conservation laws parametrised by arbitrary functions of the velocity $\vec u$ projected along the vector $\vec a$:  $\vec a=\vec\alpha+\vec\beta\times\vec r$, with arbitrary constant vectors $\vec\alpha$, and $\vec\beta$.  Thus, there exist certain ``invariant" directions in space with direction vector $\vec a$ along which the constrained Euler system possesses an infinite sub-symmetry and an infinite set of conservation laws.  Given that the vector $\vec a$ is a linear combination of infinitesimal translations and rotations, under which the Euler equations are symmetric, it is an interesting question as to if these invariant directions are related to the equations' symmetry properties.

The infinite series of conservation laws \eqref{euler:con1} and its connection to infinite sub-symmetries \eqref{euler:subsym1} is (to the best of our knowledge) new. The helical conservation laws reported in \cite{Kelbin} correspond to the case when $\vec\alpha=-b\hat{e}_z$, $\vec\beta=a\hat{e}_z$, and $\vec a=-ay\hat{e}_x+ax\hat{e}_y-b\hat{e}_z$, (see (\ref{euler:vector_a})) where $\hat{e}_z$ is the Cartesian unit basis vector in the $z$ direction. The infinite series of conservation laws of generalized momenta/angular momenta ((4.9) in \cite{Kelbin}) can be obtained from \eqref{euler:con1} by choosing $ a^2=-a^1 = a,\,\,\, a^3 =b$.

There are, potentially, some interesting applications of the 
results related to the existence of infinite sub-symmetries and infinite conservation laws for the Euler system.  For example, it allows for the investigation of flow stability and turbulence structure in three dimensions (subject to a flow constraint) using the nonlinear stability analysis developed by Arnold, see e.g. \cite{Arnold1966}.  In addition, it suggests the possibility that, for these flow configurations, the Euler equations might be linearizable via a hodograph transformation.
\smallskip

Note that the very existence of infinite sub-symmetries \eqref{euler:subsym1} (and corresponding infinite conservation laws \eqref{euler:con1}) for the essentially non-linear system of Euler equations is quite remarkable. And the fact that we considered a constrained Euler system \eqref{euler:sys} does not change this conclusion; for regular symmetries, the existence of an additional ``conditional" symmetry for a constrained system is usually, rather limited, see e.g. \cite{Levi1989}, \cite{Olver87}.

\subsection{Less Constrained Euler system}
The result (\ref{euler:con1}) can also be obtained without imposing $\Delta_6$ in \eqref{euler:sys}. Specifically, we consider the following constrained Euler system:
\begin{align}\label{euler:wsys}
\begin{split}
&\Delta_i=u^i_t+u^ju^i_j+p_i=0,\hspace{4 mm}i=1,2,3,\\
&\Delta_4=u^i_i=0,\\
&\Delta_5=a^ip_i=0,
\end{split}
\end{align}
where $\vec a=\vec\alpha+\vec\beta\times\vec r$ as before.  Note that the only constraint we are imposing ($\Delta_5$) is on the pressure $p$; there is no restriction on the velocities $\vec u$.  

The system (\ref{euler:wsys}) admits the following infinite sub-symmetry operator:
\begin{align}\label{euler:wop}
\text{X}=\frac{1}{|\vec a|^2}f(\vec a\cdot\vec u)a^i\partial_{u^i}-\frac{1}{|\vec a|^2}\vec a\cdot\vec uf(\vec a\cdot\vec u)\partial_p=f(v)\partial_v-\frac{1}{|\vec a|^2}vf(v)\partial_p.
\end{align}
Note that this sub-symmetry is different from (\ref{euler:op}); the extra term involving variations of the pressure $p$ is related to the full nonlinearity of this system.  
Consider 
the following linear combination of equations analogous to \eqref{euler:combination}:
\begin{align}\label{euler:combo}
a^i\Delta_i+a^iu^i\Delta_4=\partial_tv+\partial_j(vu^j)+a^i\partial_ip.
\end{align}
An application of $\text{X}$ to this combination gives:
\begin{align*}
&\text{X}\left(a^i\Delta_i+a^iu^i\Delta_4\right)=D_tf+D_j(u^jf+v\text{X}u^j)-a^i D_i\left(\frac{1}{|\vec a|^2}vf\right)\\
&\hspace{32 mm}=D_tf+D_j(u^jf)\\
&\hspace{32 mm}=f'(a^i\Delta_i-\Delta_5)+f\Delta_4=0
\end{align*}
when $\Delta=0$ ($\vec\nabla\cdot\vec a=0$ for $\vec a=\vec\alpha+\vec\beta\times\vec r$). We see that the operator (\ref{euler:wop}) is indeed, a sub-symmetry of the system (\ref{euler:wsys}),
leading to the infinite conservation law (\ref{euler:con1}).

\section{Constrained vorticity equations}
\label{sec:vort}
\subsection{Infinite sub-symmetries and infinite conservation laws}
Consider the three-dimensional constrained vorticity system $\,\Delta_v,\,\,1\le v\le 12$:
\begin{align}\label{vort:sys}
\begin{split}
&\Delta_{i}=\omega^i_t+u^j\omega^i_j-\omega^ju^i_j+\omega^iu^j_j-u^i\omega^j_j=0,\hspace{4 mm}i=1,2,3\\
&\Delta_4=\omega^i_i=0,\\
&\Delta_{4+i}=u^i_t+u^ju^i_j+p_i=0,\hspace{4 mm}i=1,2,3,\\
&\Delta_8=u^i_i=0,\\
&\Delta_{8+i}=\omega^i-\epsilon^{ijk}u^k_j=0,\hspace{4 mm}i=1,2,3,\\
&\Delta_{12}=\varphi_iu^i-\gamma=0,
\end{split}
\end{align}
where $\varphi_i$
are the components of 
the gradient of some function $\varphi(\vec r)$, $\gamma = \text{const}$, $u^i(t,\vec r),\,$ and $\omega^i(t,\vec r),\,\,i=1,2,3$ are the velocity and vorticity vectors, and $p(t,\vec r)$ is the pressure. Equations 5 -- 8 are Euler equations, equations 9 -- 11 define vorticity vector $\vec\omega=\vec\nabla\times\vec u$, equations 1--3 are the vorticity equations obtained by taking the curl of the Euler equations, and equation 4 is a consequence of equation 8. Equation 12 is the constraint that is discussed below.

\begin{prop}
\label{vort:prop:subsym}
The constrained vorticity system \emph{(\ref{vort:sys})} admits the following infinite sub-symmetry $X$:
\begin{align}
\label{vort:subsym}
{\text{\emph{X}}}\left[\varphi^i\Delta_i+\left(\varphi_iu^i\right)\Delta_4+\omega^jD_j\Delta_{12}\right]=0
\end{align}
when $\Delta=0$.  The sub-symmetry operator ${\text{\emph{X}}}$ is given by:
\begin{align}
\label{vort:op}
{\text{\emph{X}}}=\frac{1}{|\vec\nabla\varphi|^2}f\left(\varphi_i\omega^i\right)\varphi_i\partial_{\omega^i},
\end{align}
where $f$ is an arbitrary function of $\varphi_i\omega^i$.
\end{prop}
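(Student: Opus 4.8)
\medskip

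The plan is to mirror the ``Construction'' preceding Proposition \ref{euler:prop:subsym}: project the vorticity equations $\Delta_i$ ($i=1,2,3$) of \eqref{vort:sys} along $\vec\nabla\varphi$, introduce the projected variable $w:=\varphi_i\omega^i$, verify that the particular combination appearing in \eqref{vort:subsym} is \emph{identically} (for arbitrary smooth $u,\omega,p$) the total divergence $D_tw+D_j(u^jw)$, check that the operator \eqref{vort:op} realizes the transformation $w\mapsto f(w)$, and then conclude by Corollary \ref{cor:total_div}.

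\textbf{Step 1 (the key identity).} Expanding $\varphi_i\Delta_i$ from \eqref{vort:sys} and using that $\varphi_i=\partial_i\varphi(\vec r)$ is independent of $t$ and that the Hessian $\varphi_{ij}=\partial_i\partial_j\varphi$ is symmetric, I would exhibit the three cancellations produced by the two ``correction'' terms $(\varphi_iu^i)\Delta_4$ and $\omega^jD_j\Delta_{12}$ (note $D_j\Delta_{12}=\varphi_{ij}u^i+\varphi_iu^i_j$, since $\gamma$ is constant): the contribution $-\varphi_iu^i\omega^j_j$ coming from the term $-u^i\omega^j_j$ of $\Delta_i$ is cancelled by $(\varphi_iu^i)\Delta_4$; the cross term $-\varphi_i\omega^ju^i_j$ is cancelled by the $\varphi_iu^i_j$ part of $\omega^jD_j\Delta_{12}$; and the Hessian term $-u^j\varphi_{ij}\omega^i$ arising when $\varphi_i$ is moved through $\omega^i_j$ is cancelled, using $\varphi_{ij}=\varphi_{ji}$, by the $\varphi_{ij}u^i$ part of $\omega^jD_j\Delta_{12}$. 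What survives is $\varphi_i\omega^i_t+u^j(\varphi_i\omega^i)_j+(\varphi_i\omega^i)u^j_j$, that is,
\[
\varphi^i\Delta_i+(\varphi_iu^i)\Delta_4+\omega^jD_j\Delta_{12}=D_tw+D_j(u^jw),\qquad w=\varphi_i\omega^i .
\]
This is the step that motivates the specific choice of $\Xi^v\Delta_v$ in \eqref{vort:subsym}, and I expect the (mild) index bookkeeping here to be the main obstacle; in particular it is worth double-checking that the constraint $\Delta_{12}$ and the fact that the $\varphi_i$ are the components of a gradient (so that $\varphi_{ij}$ is symmetric) are both genuinely used.

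\textbf{Step 2 (the operator realizes $w\mapsto f(w)$).} From \eqref{vort:op} one has $\text{X}\omega^i=|\vec\nabla\varphi|^{-2}f(w)\,\varphi_i$, hence $\text{X}w=\varphi_i\,\text{X}\omega^i=|\vec\nabla\varphi|^{-2}f(w)\,\varphi_i\varphi_i=f(w)$; this also explains why the normalization $|\vec\nabla\varphi|^{-2}$ must be present in \eqref{vort:op} (and that $|\vec\nabla\varphi|^2\neq0$ is assumed).

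\textbf{Step 3 (conclusion).} Since the left-hand side of \eqref{vort:subsym} equals the total divergence $D_tw+D_j(u^jw)$ identically and the prolonged $\text{X}$ commutes with total derivatives,
\[
\text{X}\!\left[\varphi^i\Delta_i+(\varphi_iu^i)\Delta_4+\omega^jD_j\Delta_{12}\right]=D_t(\text{X}w)+D_j(\text{X}(u^jw))=D_tf(w)+D_j\!\left(u^jf(w)\right)=f'(w)\,(w_t+u^jw_j)+f(w)\,u^j_j .
\]
On the solution manifold $\Delta=0$ we have $u^j_j=\Delta_8\doteq0$, while $D_tw+D_j(u^jw)$ (being $\Xi^v\Delta_v$, with each $\Delta_v=0$ and $D_j\Delta_{12}=0$) vanishes there, so $w_t+u^jw_j\doteq0$ as well; hence the right-hand side is $\doteq0$, which is \eqref{vort:subsym}. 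Equivalently, once Step 1 is in hand this is a direct application of Corollary \ref{cor:total_div} with conserved density $w=\varphi_i\omega^i$ and flux $u^jw$, and the associated infinite conservation law $D_tf(\varphi_i\omega^i)+D_j\!\left(u^jf(\varphi_i\omega^i)\right)\doteq0$ is read off in the same way.
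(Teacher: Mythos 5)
Your proposal is correct and follows essentially the same route as the paper: it establishes the pointwise identity $\varphi_i\Delta_i+(\varphi_iu^i)\Delta_4+\omega^jD_j\Delta_{12}=D_tw+D_j(u^jw)$ (the paper's equation (6.10)), and then applies the evolutionary operator, commuting it through the total derivatives exactly as in the paper's computation (6.11) and in Corollary \ref{cor:total_div}. Your index bookkeeping in Step 1 (the three cancellations, using $\varphi_{ij}=\varphi_{ji}$ and $\gamma=\mathrm{const}$) and the resulting decomposition $f'(w)\bigl[\Xi^v\Delta_v-w\Delta_8\bigr]+f(w)\Delta_8$ match the paper's argument, so nothing further is needed.
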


\begin{prop}
\label{vort:prop:con}
The constrained vorticity system \emph{(\ref{vort:sys})} admits the following infinite conservation laws:
\begin{align}
\label{vort:con}
D_tf\left(\varphi_i\omega^i\right)+D_j\left[u^jf\left(\varphi_i\omega^i\right)\right]=0
\end{align}
when $\Delta=0$, where $f$ is an arbitrary function of $\varphi_i\omega^i$, 
and $\varphi_i(\vec r)=\frac{\partial \varphi}{\partial x^i}$.
\end{prop}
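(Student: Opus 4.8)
The plan is to mirror the proofs of Propositions \ref{2d:prop:con} and \ref{euler:prop:con}: first show that the equation combination appearing in Proposition \ref{vort:prop:subsym} is literally a total divergence, then push the sub-symmetry operator \eqref{vort:op} through the total derivatives using Corollary \ref{cor:total_div}, and finally invoke Proposition \ref{vort:prop:subsym} to conclude that the resulting expression vanishes on solutions.

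The only real computation is the divergence identity
\begin{align*}
\varphi^i\Delta_i+\left(\varphi_iu^i\right)\Delta_4+\omega^jD_j\Delta_{12}=D_t\left(\varphi_i\omega^i\right)+D_j\left(u^j\varphi_i\omega^i\right),
\end{align*}
which holds for arbitrary $u,\omega,p$. Writing $w=\varphi_i\omega^i$ and using that $\varphi_i=\partial\varphi/\partial x^i$ is independent of $t$ and satisfies $\partial_j\varphi_i=\partial_i\varphi_j$, one checks term by term: $\varphi_i\omega^i_t=D_tw$; the advective term $\varphi_iu^j\omega^i_j$, the term $w\,u^j_j$ coming from $\varphi^i\Delta_i$, and the term $\omega^j(\partial_j\varphi_i)u^i$ coming from $\omega^jD_j\Delta_{12}$ (which equals $u^j(\partial_j\varphi_i)\omega^i$ after relabeling indices and using the symmetry of $\partial_j\varphi_i$) together assemble into $D_j(u^jw)$; the vortex-stretching term $-\varphi_i\omega^ju^i_j$ in $\varphi^i\Delta_i$ cancels $\omega^j\varphi_iu^i_j$ from $\omega^jD_j\Delta_{12}$; and the term $-\varphi_iu^i\omega^j_j$ in $\varphi^i\Delta_i$ cancels $(\varphi_iu^i)\Delta_4$. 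Making these cancellations close is the one place where care is required, and the essential point is that the Hessian symmetry $\partial_j\varphi_i=\partial_i\varphi_j$ is exactly what allows the mixed terms to match.

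Given this identity, the rest is routine. By Proposition \ref{vort:prop:subsym}, $\text{X}$ applied to the combination $\varphi^i\Delta_i+(\varphi_iu^i)\Delta_4+\omega^jD_j\Delta_{12}$ vanishes on $\Delta=0$; since this combination equals the total divergence $D_tM+D_jM^j$ with $M=\varphi_i\omega^i$, $M^j=u^j\varphi_i\omega^i$, and $\text{X}$ commutes with total derivatives, Corollary \ref{cor:total_div} gives the conservation law $D_t(\text{X}M)+D_j(\text{X}M^j)=0$ on $\Delta=0$. Finally, because the operator \eqref{vort:op} acts only on the $\omega^i$, a one-line computation yields $\text{X}M=\text{X}(\varphi_i\omega^i)=\frac{1}{|\vec\nabla\varphi|^2}f(\varphi_i\omega^i)\,\varphi_i\varphi_i=f(\varphi_i\omega^i)$ and $\text{X}M^j=u^j\,\text{X}(\varphi_i\omega^i)=u^jf(\varphi_i\omega^i)$, so the conservation law reads $D_tf(\varphi_i\omega^i)+D_j[u^jf(\varphi_i\omega^i)]=0$ on $\Delta=0$, which is exactly \eqref{vort:con}. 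I expect no genuine obstacle beyond the index bookkeeping of the divergence identity; the remainder is a direct transcription of the constrained-Euler argument of Section \ref{sec:euler}.
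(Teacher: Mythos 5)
Your proposal is correct and follows essentially the same route as the paper: the paper likewise establishes the identity $\varphi_i\Delta_i+v\Delta_4+\omega^jD_j\Delta_{12}=D_tw+D_j(u^jw)$ with $w=\varphi_i\omega^i$ (using exactly the time-independence and Hessian symmetry of $\varphi_i$ that you isolate), then applies the evolutionary operator X through the total derivatives and reads off the conservation law from the vanishing of the sub-symmetry deformation on solutions. Your index bookkeeping and the evaluation $\text{X}(\varphi_i\omega^i)=f(\varphi_i\omega^i)$ both check out.
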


\begin{prop}
\label{vort:prop:con2}
If $\gamma=0$ in the constrained vorticity system \emph{(\ref{vort:sys})}, then the arbitrary function $f(\varphi_i\omega^i)$ in the above propositions can be replaced by $f(\varphi,\varphi_i\omega^i)$, where $f$ is an arbitrary function of its arguments.
\end{prop}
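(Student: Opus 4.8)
The plan is to reuse the construction underlying Proposition \ref{vort:prop:subsym}, tracking the single extra term produced by the new $\varphi$-dependence. First I would record the algebraic identity
\[
\varphi^i\Delta_i+(\varphi_iu^i)\Delta_4+\omega^jD_j\Delta_{12}=D_t(\varphi_i\omega^i)+D_j\!\left(u^j\varphi_i\omega^i\right),
\]
obtained by projecting the vorticity equations $\Delta_i$ onto $\varphi_i$: writing $\psi\equiv\varphi_i\omega^i$ one has $\varphi_i\omega^i_j=D_j\psi-(\partial_j\varphi_i)\omega^i$ and $\varphi_i u^i_j=D_j(\varphi_iu^i)-(\partial_j\varphi_i)u^i$, and the two Hessian contributions $(\partial_j\varphi_i)u^j\omega^i$ cancel by symmetry of $\partial_i\partial_j\varphi$; since $D_j\Delta_{12}=D_j(\varphi_iu^i)$ for any constant $\gamma$, the remaining terms assemble into the stated total divergence, with $M^t=\psi$, $M^j=u^j\psi$. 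This identity is exactly what makes $\Xi^v\Delta_v$ in \eqref{vort:subsym} a conservation law, and it is independent of the value of $\gamma$.

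Next I would verify that the enlarged operator $\text{X}=|\vec\nabla\varphi|^{-2}f(\varphi,\varphi_i\omega^i)\,\varphi_i\partial_{\omega^i}$ (see \eqref{vort:op}) is still a sub-symmetry of \eqref{vort:sys} relative to this same combination. Because $\text{X}$ is evolutionary (it acts only on $\omega$ and its prolongations) it commutes with $D_t$ and $D_j$, and $\text{X}\psi=|\vec\nabla\varphi|^{-2}\varphi_i\varphi_i\,f=f$, $\text{X}(u^j\psi)=u^jf$. Hence, using $D_t\varphi=0$ and $D_j\varphi=\varphi_j$,
\[
\text{X}\!\left[\varphi^i\Delta_i+(\varphi_iu^i)\Delta_4+\omega^jD_j\Delta_{12}\right]=D_tf+D_j(u^jf)=f_\psi\bigl(\psi_t+u^j\psi_j\bigr)+f\,u^j_j+f_\varphi\,\varphi_ju^j.
\]
On the solution manifold the first bracket vanishes (it equals the total-divergence combination above minus $\psi\,\Delta_8$), the term $f\,u^j_j=f\,\Delta_8\doteq0$, and the genuinely new term is $f_\varphi\,\varphi_ju^j=f_\varphi(\Delta_{12}+\gamma)\doteq f_\varphi\,\gamma$. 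This is precisely where the hypothesis enters: if $\gamma=0$ then $\varphi_ju^j\doteq0$, so $\text{X}[\Xi^v\Delta_v]\doteq0$ and $(\text{X},\Xi^v\Delta_v)$ is a sub-symmetry; if instead $f$ is independent of $\varphi$ this term is absent and one recovers Proposition \ref{vort:prop:subsym} with no restriction on $\gamma$.

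Finally I would apply Corollary \ref{cor:total_div}: since $\Xi^v\Delta_v=D_tM^t+D_jM^j$ is a conservation law and $(\text{X},\Xi^v\Delta_v)$ is a sub-symmetry, the tuple $\tilde M^t=\text{X}M^t=f(\varphi,\varphi_i\omega^i)$, $\tilde M^j=\text{X}M^j=u^jf(\varphi,\varphi_i\omega^i)$ is again a conservation law, i.e.
\[
D_tf(\varphi,\varphi_i\omega^i)+D_j\!\left[u^jf(\varphi,\varphi_i\omega^i)\right]\doteq0,
\]
which is the asserted generalization of \eqref{vort:subsym} and \eqref{vort:con}. The only steps beyond transcribing the $\gamma\neq0$ argument are the Hessian-symmetry cancellation in the first identity and the observation that the extra chain-rule term $f_\varphi\varphi_ju^j$ is proportional to $\Delta_{12}+\gamma$; the latter is the crux and the only place the hypothesis $\gamma=0$ is needed, so I expect no real obstacle once that term is isolated.
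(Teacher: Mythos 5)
Your proof is correct and follows essentially the same route as the paper, which merely asserts that the proposition ``follows from virtually the same computations'' as Propositions \ref{vort:prop:subsym}--\ref{vort:prop:con}. You have in fact supplied the one detail the paper leaves implicit, namely that the extra chain-rule term $f_\varphi\,\varphi_j u^j = f_\varphi(\Delta_{12}+\gamma)\doteq \gamma f_\varphi$ is exactly where the hypothesis $\gamma=0$ is needed.
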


\subsection{Construction}
We establish Propositions \ref{vort:prop:subsym}-\ref{vort:prop:con2} by a construction
similar to that done for the Euler equations in Section \ref{sec:2d}. However, because the vorticity equation is already linear, our objective here is to make the equation for the projected variable $w=a^i\omega^i$ homogeneous in first derivatives.  This involves placing a constraint on the velocity $\vec u$ rather than on the vorticity $\vec\omega$. 

First, we establish the constrained system (\ref{vort:sys}).  Consider the ``unconstrained" three-dimensional vorticity system: the first 11 equations of the system \eqref{vort:sys}. To construct an appropriate constraint for the vector $\vec a$, we consider a linear combination of the ``main" equations $a^i\Delta_i$ along some direction vector $a^i(t,\vec r),\, i=1,2,3$ and rewrite this combination using the projected variables $v=a^iu^i$ and $w=a^i\omega^i$:
\begin{align}\label{vort:proj}
\begin{split}
&a^i\Delta_i=a^i\omega^i_t+u^j\omega^i_ja^i-\omega^ju^i_ja^i+wu^j_j-v\omega^j_j\\
&\hspace{8 mm}=D_tw+D_j(u^jw)-D_j(\omega^jv)-\omega^ia^i_t-u^ja^i_j\omega^i+\omega^ja^i_ju^i.
\end{split}
\end{align}
We define the vector $\vec a$ in such a way that an infinite sub-symmetry is generated.  The last three terms of (\ref{vort:proj}) vanish if we select
$a^i=\varphi_i(\vec r)=\frac{\partial \varphi}{\partial x^i}$
for some function $\varphi$.
It was shown in \cite{Rosenhaus02} that arbitrary functions $\varphi(\vec r)$ lead only to a finite number of essential (integral) conservation laws (see also \cite{Rosenhaus03a}); therefore, these types of continuity equations are of lesser interest to us.  We fix $\varphi$ in such a way so that an infinite number of conservation laws is generated.

Consider:
\begin{align}\label{vort:proj2}
\varphi_i\Delta_i+v\Delta_4=D_tw+D_j(u^jw)-\omega^jD_j(\varphi_iu^i).
\end{align}
Similar to the situation with equation \eqref{euler:proj1}
making the last term in (\ref{vort:proj2}) vanish    
\begin{align}\label{vort:constraint}
\Delta_{12}=v-\gamma=\varphi_iu^i-\gamma=\vec u\cdot\vec\nabla\varphi(\vec r)-\gamma=0, \quad \gamma=\text{ const},
\end{align}
we anticipate that an infinite sub-symmetry will be obtained.  The intersection of this constraint (\ref{vort:constraint}) and the unconstrained system forms the constrained system (\ref{vort:sys}).

Note that the velocity projection $v$ in this case is required to be constant \textit{everywhere}, while the constraint imposed for 
the Euler equations ($\Delta_6$ in \eqref{euler:sys}) required for $v$ to be constant only in a single direction determined by $\vec a$.

Next, we establish Proposition \ref{vort:prop:subsym}.  By defining $\Delta_{12}$ in (\ref{vort:constraint}), we find that the following conservation law (total divergence equation) for $w$ has been constructed:
\begin{align}\label{vort:proj3}
\varphi_i\Delta_i+v\Delta_4+\omega^jD_j\Delta_{12}=D_tw + D_j(u^jw)=0
\end{align}
when $\Delta_v=0,\,\,1\le v\le 12$.  To show that \eqref{vort:op} is, in fact, a sub-symmetry of (\ref{vort:proj3}) with respect to system (\ref{vort:sys}), we apply the operator $\text{X}$ to (\ref{vort:proj3}) and obtain:
\begin{align} \label{7.10}
&\text{X}\left(\varphi_i\Delta_i+v\Delta_4+\omega^jD_j\Delta_{12}\right)=D_tf(w)+D_j(u^jf(w))\\ 
&\hspace{50 mm}=f'(w)(D_tw+u^jD_jw)+f(w)\Delta_4 \nonumber  \\
&\hspace{50 mm}=f'(w)\left[\varphi_i\Delta_i+v\Delta_4+\omega^jD_j\Delta_{12}-w\Delta_8\right]+f(w)\Delta_4=0 \nonumber
\end{align}
when $\Delta=0$.  This establishes Proposition \ref{vort:prop:subsym}.

According to the expression \eqref{7.10}, the sub-symmetry $\text{X}$ gives rise to the infinite conservation law: 
\begin{align*}
D_tf(w)+D_j\left(u^jf(w)\right)=0
\end{align*}
when $\Delta=0$, where $f$ is an arbitrary function, and $w=\varphi_i\omega^i$.  This establishes Proposition \ref{vort:prop:con}.  Proposition \ref{vort:prop:con2} follows from virtually the same computations.

\subsection{Discussion}

The infinite series of conservation laws \eqref{vort:con}, and its connection to infinite sub-symmetries \eqref{vort:subsym} is new. There are some notable special cases.  

1. Zero flow velocity in the $z$ direction (i.e. $\hat{e}_z\cdot\vec u=0$) and no dependence of the dynamical variables on $z$ (Cartesian plane). This configuration corresponds to the well-known case of
two-dimensional vorticity equations
discussed in Section \ref{sec:2d}. The infinite conservation laws \eqref{2d:con} can be obtained from (\ref{vort:con}) by setting $\varphi(\vec r)=z$.
In this case, the infinite conservation laws contain arbitrary functions of $w$, $\,w=\hat{e}_z\cdot\vec\omega=\omega_z$.

2.
The case of
``cylindrical invariance" discussed in \cite{Kelbin} as a special case of their helically invariant system (see (6.30) in \cite{Kelbin}).
In this case, we have $\varphi(\vec r)=\arctan\left(y/x\right),$ the azimuthal angle; $r=\sqrt{x^2+y^2}$, and it follows that infinite conservation laws in this case contain arbitrary functions of $w=\left(-\frac{y}{r^2}\hat{e}_x+\frac{x}{r^2}\hat{e}_y\right)\cdot\vec\omega=\frac{1}{r}\omega_\varphi$.

Note that the helical invariance discussed in \cite{Kelbin} does not fall under this framework since there exists no function $\varphi$ which determines the helical invariance of the velocity field.
However, within our framework, it is possible to generate infinite conservation laws associated with velocity constraints other than axial and cylindrical. For example, if the velocity constraint is spherical, then the system will have infinite vorticity conservation laws associated with a spherical sub-symmetry. Similar results would also hold for oblate spheroidal invariance, paraboloidal invariance, elliptical invariance, etc.

\subsection{Non-triviality of infinite conservation laws}
The vorticity system (\ref{vort:sys}) is, obviously, underdetermined since some of the equations are differential consequences of the others. 
For example, $D_i \Delta_{8+i} = \Delta_4$ everywhere in the jet space.  Thus, in order to make sure that the conservation laws for such an abnormal 
system (cf. discussion in \cite{Olver}) are non-trivial, we need to account for all possible interdependencies among the equations of the system.
We examine the characteristics of the conservation law \eqref{vort:con}:
\begin{align}\label{vort:characteristic}
&D_tf(w)+D_j\left(u^jf(w)\right) \nonumber \\
&=f'(w)\left[\varphi_i\Delta_i+v\Delta_4+\omega^jD_j\Delta_{12}-w\Delta_8\right]+f(w)\Delta_4 \\ 
&=f'(w)\varphi_i\Delta_i+\left[\gamma f'(w)+f(w)\right]\Delta_4-wf'(w)\Delta_8-\omega^jD_jf'(w)\Delta_{12}+D_j\left(f'(w)\omega^j\Delta_{12}\right). 
\nonumber
\end{align}
The last term is a trivial conservation law of the first kind. Notice that $\Delta_{12}$, and $\Delta_8$  have no interdependencies with other equations of the subset. Rewriting the vorticity equations as $\Delta_i=\omega^i_t+\left[\vec\nabla\times(\vec\omega\times\vec u)\right]^i,\,i=1,2,3$, we can see that $D_i\Delta_i=\omega^i_{it}=D_t\Delta_4$, and the following differential relationship:
\begin{align}\label{vort:inter}
D_t\Delta_4-D_i\Delta_i=0
\end{align} 
holds everywhere in the jet space. Using the interdependency relation (\ref{vort:inter}) we can rewrite equation (\ref{vort:characteristic})
as follows:
\begin{align*}
&f'(w)\varphi_i\Delta_i=D_i\left(f'(w)\varphi\Delta_i\right)-\varphi D_if'(w)\Delta_i-\varphi f'(w)D_i\Delta_i\\
&=D_i\left(f'(w)\varphi\,\Delta_i\right)-\varphi D_if'(w)\Delta_i-\varphi f'(w)D_t\Delta_4\\
&=\varphi\, D_tf'(w)\Delta_4-\varphi D_if'(w)\Delta_i-D_t\left(f'(w)\varphi\Delta_4\right)+D_i\left(f'(w)\varphi\Delta_i\right)
\end{align*}
Substituting into (\ref{vort:characteristic}) and discarding the trivial conservation laws gives for the rhs:
\begin{align*}
-\varphi\,D_if'(w)\Delta_i+\left[\gamma f'(w)+f(w)+\varphi \partial_tf'(w)\right]\Delta_4-wf'(w)\Delta_8-\omega^j D_jf'(w)\Delta_{12}.
\end{align*}
We can see that the characteristics of the conservation law (\ref{vort:characteristic}) (coefficients of the $\Delta$'s) are nonzero in this expression, and there is no interdependencies between remaining equations in the rhs of (\ref{vort:characteristic}). Therefore, 
the infinite conservation law (\ref{vort:con}) is nontrivial.

\subsection{Infinite sub-symmetries and conservation laws for helically invariant Euler equations}
\label{subsec:2comp}
The two-component helically invariant Euler and vorticity equations are as follows \cite{Kelbin}:
\begin{align}\label{vort:helicalsys}
\begin{split}
&\Delta_1=\partial_tu^r+u^r\partial_ru^r+\frac{1}{B}u^\xi\partial_\xi u^r-\frac{b^2B^2}{r^3}\left(u^\xi\right)^2+\partial_rp=0\\
&\Delta_2=\partial_tu^\xi+u^r\partial_ru^\xi+\frac{1}{B}u^\xi\partial_\xi u^\xi+\frac{b^2B^2}{r^3}u^ru^\xi+\frac{1}{B}\partial_\xi p=0\\
&\Delta_3=\frac{1}{r}u^r+\partial_ru^r+\frac{1}{B}\partial_\xi u^\xi=0\\
&\Delta_4=\omega^\eta-\left[\frac{1}{B}\partial_\xi u^r-\frac{1}{r}\partial_r\left(ru^\xi\right)+\frac{a^2B^2}{r}u^\xi\right]=0\\
&\Delta_5=\partial_t\omega^\eta+\frac{1}{r}\partial_r\left(ru^r\omega^\eta\right)+\frac{1}{B}\partial_\xi\left(u^\xi\omega^\eta\right)-\frac{a^2B^2}{r}u^r\omega^\eta=0,
\end{split}
\end{align}
where $B(r)=r/\sqrt{a^2r^2+b^2}$
($a$, $b$ are constants).  Functions $u^r,u^\xi,$ and $\omega^\eta$ of $(t,r,\xi)$ represent projections of the velocity vector $\vec u$ and vorticity vector $\vec\omega$ with respect to the orthonormal basis $(\hat{e}_r,\hat{e}_\xi,\hat{e}_{\perp\eta})$ and orthogonal coordinates $r=\sqrt{x^2+y^2}$ and $\xi=az+b\arctan\left(y/x\right)$.  In these variables, a total divergence of a 3-tuple $(M^t, M^r, M^\xi)$ is
given by:
\begin{align}
D_tM^t+\frac{1}{r}D_r\left(rM^r\right)+\frac{1}{B}D_\xi M^\xi =0.
\end{align}
Note that for these flows, there is no dependence on helical variable $\eta=a\arctan(y/x)-bz/r^2$, and the third velocity component $u^\eta=\hat e_{\perp\eta}\cdot\vec u=0$.

Let us show that $(X, \: \frac{B}{r}\Delta_5)$ is an infinite sub-symmetry of the system \eqref{vort:helicalsys}, where 
\begin{align}\label{vort:helicalsym}
\text{X}=f(w)\partial_w=\frac{r}{B}f\left(\frac{B}{r}\omega^\eta\right)\partial_{\omega^\eta},
\end{align}
$f$ is an arbitrary function and $w=B\omega^\eta/r$. 

\noindent Indeed, notice first that the combination $\frac{B}{r}\Delta_5$ is a total divergence:
\begin{align}\label{vort:helicalcombo}
\begin{split}
&\frac{B}{r}\Delta_5=D_t\left(\frac{B}{r}\omega^\eta\right)+\frac{1}{r}D_r\left(Bu^r\omega^\eta\right)+
\frac{1}{B} D_\xi\left(\frac{B}{r}u^\xi \omega^\eta\right)-\frac{a^2B^3}{r^2}u^r\omega^\eta-
D_r\left(\frac{B}{r}\right)u^r\omega^\eta\\
&\hspace{9 mm}=D_tw+\frac{1}{r}D_r\left(ru^rw\right)+\frac{1}{B}D_\xi\left(u^\xi w\right),
\end{split}
\end{align}
The action of operator (\ref{vort:helicalsym}) on this combination is:
\begin{align}\label{vort:helicalaction}
\begin{split}
&\text{X}\left(\frac{B}{r}\Delta_5\right)= D_tf(w)+\frac{1}{r} D_r\left(ru^rf(w)\right)+\frac{1}{B}D_\xi\left(u^\xi f(w)\right)\\
&\hspace{19 mm}=f'(w)\left[\frac{B}{r}\Delta_5-w\Delta_3\right]+f(w)\Delta_3=0
\end{split}
\end{align}
when $\Delta=0$.  Therefore, (\ref{vort:helicalsym}) is an infinite sub-symmetry of $\frac{B}{r}\Delta_5$ of the system (\ref{vort:helicalsys}). Note that the operator $X$ is not a symmetry of the vorticity system (\ref{vort:helicalsys}), since $\text{X}\Delta_4\neq 0$ when $\Delta=0$.

\smallskip
The equation \eqref{vort:helicalaction} generates an infinite series of vorticity conservation laws $(M^t, M^r, M^\xi)$: 
\begin{align}												\label{Kelbin:cons_omega}
M^t = f\left(\frac{B}{r}\omega^\eta\right), \quad
M^r = u^r f\left(\frac{B}{r}\omega^\eta\right), \quad
M^{\xi} = u^{\xi} f\left(\frac{B}{r}\omega^\eta\right),
\end{align}
with an arbitrary function $f$, which was obtained in \cite{Kelbin}.

\bigskip

We could also consider a more general three-component helically invariant flow with $\hat e_{\perp\eta}\cdot\vec u\neq 0$ , and complement our system with additional helical constraints as in \cite{Kelbin}. The resulting constrained Euler vorticity system can be shown to have two additional infinite sub-symmetries $(X_1,\Delta_h)$ and $(X_2,\Delta_e)$, and conservation laws:  
\begin{align}
\label{hel2:op1}
&\text{X}_1=H\omega^i\partial_{\omega^i}+HE\partial_p, \nonumber \\
&\text{X}_2=\frac{1}{|\vec a|^2}Ha^i\partial_{u^i}-\frac{1}{|\vec a|^2}H\vec a\cdot\vec u\,
\partial_p=\frac{B}{r}H\partial_{u^\eta}-
\frac{B}{r}
Hu^\eta \partial_{p}, \nonumber \\
&\Delta_h=\partial_th+\vec\nabla\cdot\left(\vec u\times\vec\nabla E+(\vec\omega\times\vec u)\times\vec u\right)=0,\\
&\Delta_{e}=\vec\nabla\cdot\left[E\left(\vec u-u^\eta e_{\perp\eta}\right)\times\vec\nabla\left(\frac{r}{B}u^\eta\right)\right]
\nonumber =0
\end{align}
when $\Delta=0$. Here $h=\vec u\cdot\vec \omega$ is the helicity, $\Delta_h=0$ is the helicity conservation law, $E=\frac{1}{2}|\vec u|^2+p$ is the energy, $H=H(\vec a\cdot \vec u)$ is an arbitrary function, and $\vec a=a(-ye_x+xe_y)-be_z$. 

A combination of these two infinite sub-symmetries 
yields the generalized helicity conservation law obtained in \cite{Kelbin}:
\begin{align} \label{gener_sub}
X_1\Delta_h-X_2\Delta_{e}=
\partial_t\left(\vec u\cdot\vec\omega\,H\right)+\vec\nabla\cdot\left[H\left(\vec u\times\vec\nabla E+(\vec\omega\times\vec u)\times\vec u\right)+Eu^\eta e_{\perp\eta}\times\vec\nabla H\right]=0,
\end{align}
(when $\Delta=0$).

\smallskip

\subsection{Helicity sub-symmetry and conservation law}
Although conservation laws for Lagrangian systems can always be obtained from corresponding (variational) symmetries, this may not be the case for non-Lagrangian systems.  The helicity conservation law of the Euler equations is one such example. In \cite{Olver82}, a Hamiltonian structure of the vorticity equations was introduced, and conservation laws corresponding to (Hamiltonian) point symmetries were constructed using Noether's theorem. In this way, conservation of energy was obtained from invariance under time translations:
\begin{align*}
\partial_{t}\left(\frac{1}{2}|\vec u|^2\right)+\vec\nabla\cdot\left[\vec u\left(\frac{1}{2}|\vec u|^2+p\right)\right]=0.
\end{align*}
However, conservation of helicity
\begin{align}\label{helic_cons}
\partial_t\left(\vec u\cdot\vec\omega\right)+\vec\nabla\cdot\left[\vec u\times\vec\nabla E+\left(\vec\omega\times\vec u\right)\times\vec u\right]=0
\end{align}
and the infinite ``Casimir" conservation laws discussed in Section \ref{sec:2d} could not be found from point symmetries of the system.  Instead, in \cite{Olver82}, this conservation law (and the infinite vorticity conservation laws) were related to the degeneracy of the Hamiltonian Poisson bracket.  

We show that the helicity conservation law \eqref{helic_cons} can be naturally obtained from a sub-symmetry of the Euler system.  Consider the Euler equations:
\begin{align}\label{helicity:sys}
\begin{split}
&\Delta_i=u^i_t+u^ju^i_j+p_i=0,\hspace{4 mm}i=1,2,3,\\
&\Delta_4=u^i_i=0.
\end{split}
\end{align}
The conservation of energy of the system \eqref{helicity:sys}:
\begin{align}\label{helicity:energy}
\Delta_E=\partial_{t}\left(\frac{1}{2}u^iu^i\right)+\partial_j\left[Eu^j\right]=u^i\Delta_i+E\Delta_4=0, \qquad E=\frac{1}{2}u^iu^i+p,
\end{align}
when $\Delta=0$.  Because this equation is a total divergence, it satisfies the condition (\ref{euler_id}).  With this in mind, consider the following infinitesimal transformation:
\begin{align}\label{helicity:sym}
X_h=\left[\vec\omega+\frac{1}{E}\left(\vec\omega\times\vec u\right)\times\vec u\right]\cdot\partial_{\vec u}-\left(\vec u\cdot\vec\omega\right)\partial_p,
\end{align}
where $\vec\omega=\vec\nabla\times\vec u$ and $\vec a\cdot\partial_{\vec u}= a^i\partial_{u^i}$. 
\smallskip

Let us show that $(X_h, \Delta_E)$ 
is a sub-symmetry of the Euler system (\ref{helicity:sys}). Indeed, calculating $X_h\Delta_E$ and using Corollary \ref{cor:total_div}, we obtain 
\begin{align}\label{helicity:action}
X_h\Delta_E=\partial_t\left(\vec u\cdot\vec\omega\right)+\vec\nabla\cdot\left[\vec u\times\vec\nabla E+\left(\vec\omega\times\vec u\right)\times\vec u\right]=\omega^i\Delta_i+u^i\epsilon^{ijk}D_j\Delta_k=0
\end{align}
when $\Delta=0$. The continuity equation generated by the sub-symmetry operator $X_h$ is the helicity conservation law.

\section{Conclusions}
We discussed the problem of correspondences between sub-symmetries and conservation laws for differential systems without well defined Lagrangian functions (quasi-Noether systems). We demonstrated that sub-symmetries are naturally related to conservation laws, and proved a theorem that associates to each sub-symmetry of a quasi-Noether system a local conservation law. We applied this approach to the Euler equations in both velocity and vorticity formulations. For Euler and constrained Euler systems, we obtained several classes of infinite sub-symmetries and generated corresponding classes of infinite conservation laws parametrized by arbitrary functions of velocities or vorticities. We showed that our set includes all previously known infinite conservation laws of the incompressible Euler equations;  we generated the infinite Casimir invariants of 2D flow, the conservation laws of generalized momenta/angular momenta, generalized helicity, and other classes of infinite conservation laws obtained in \cite{Kelbin}; (the infinite conservation laws obtained in \cite{Cheviakov} are discussed in \cite{Rosenhaus15}). We also showed that known helicity conservation law that could not be explained using a symmetry mechanism naturally arises from sub-symmetries.

\smallskip

We have found new classes of infinite conservation laws, \eqref{euler:con1}, and \eqref{vort:con}, and obtained a set of possible constraints on incompressible flows to possess infinite sub-symmetries and infinite conservation laws. The very fact that the essentially non-linear system of Euler equations admits infinite sub-symmetries \eqref{euler:subsym1} with arbitrary functions of dependent variables (and corresponding infinite conservation laws \eqref{euler:con1}) seems rather remarkable. Among other interesting consequences of this study is the existence of some spatial ``invariant" directions along which a flow exhibits an infinite sub-symmetry and an infinite set of conservation laws.

\section*{Acknowledgments}
One of the authors (V.R.) is grateful to M. Oberlack for helpful discussions in June of 2013.


\begin{thebibliography}{60}


\bibitem{AncoBluman1997}
Anco S. C., and Bluman G. (1997). Direct construction of conservation laws from field equations, {\em Phys. Rev. Lett.} {\bf78},
(1997) 15, 2869-2873.

\bibitem{Arnold1966}
Arnold V.I. (1966). On an a-priori estimate in the theory of hydrodynamic stability, Izv. Vysshih Uchebnyh Zavedenii, Matematika; Engl. transl.: {\em Am. Math. Soc. Trans.} {\bf79}, (1966), 267--269. 

\bibitem{Bluman69}
Bluman, G.W., \& Cole, J. D. (1969). The general similarity solution of the heat equation. Journal of Mathematics and Mechanics, 18(11).
%Seminal work on conditional symmetries, describes the "nonclassical method" in application to the heat equation to discover new similarity solutions.

\bibitem{Bluman_Chev}
Bluman, G.W., Cheviakov A., Anco, S., Applications of Symmetry Methods to Partial Differential Equations, Applied Mathematical Sciences 168, Springer, 2010.

\bibitem{Bouchet}
Bouchet, F., \& Venaille, A. (2012). Statistical mechanics of two-dimensional and geophysical flows. Physics reports, 515(5), 227-295.
%Went through nonlinear stability analysis using the Casimirs.

\bibitem{Caviglia1986}
Caviglia G., Symmetry transformations, isovectors, and conservation laws, {\em J. Math. Phys} {\bf 27} (1986), 973--978.

\bibitem{Cheviakov}
 Cheviakov, A. F. (2014) Conservation properties and potential systems of vorticity-type equations. J. Math. Phys. 55, no. 3, 033508, 16pp.

\bibitem{Chevober}
Cheviakov, A. F., \& Oberlack, M. (2014). Generalized Ertel's theorem and infinite hierarchies of conserved quantities for three-dimensional time-dependent Euler and Navier-Stokes equations. Journal of Fluid Mechanics, 760, 368--386.

\bibitem{Cicogna01}
Cicogna, G., \& Gaeta, G. (2001). Partial Lie-point symmetries of differential equations. Journal of Physics A: Mathematical and General, 34(3), 491.
%Introduces notion of "partial symmetry".  The main application was as follows.  Take a given operator (may or may not be a symmetry of any solution of the system) and apply it to the system.  If not a symmetry, then there will be some "residual system".  Require this residual to vanish, hereby imposing a differential constraint.  Next, apply the operator to the residual.  Require the residual's residual to vanish, and iterate.  This generates particular solutions that are symmetric (not invariant like in conditional symmetries, but are transformed into other solutions under the operator's Lie group) under the operator.

%\bibitem{Edelen}
%Edelen, D.G.B. (1985). Applied Exterior Calculus, Wiley-Interscience.
\bibitem{Fushchych}
Fushchych, V. I. (1991). Conditional symmetry of the equations of nonlinear mathematical physics. Ukrainian Mathematical Journal, 43(11), 1350-1364.
%Let us also mention the extensive program of investigation and of obtaining exact solutions and conditional symmetries for the equations of mathematical physics, including this study.

\bibitem{Grundland2000}
Grundland A.M., Sheftel M.B., and Winternitz P., Invariant solutions of hydrodynamic-type equations, {\em J. Phys. A: Math. Gen.} {\bf33} (2000), 8193--8215.

\bibitem{Juras1997}
Juras M., and Anderson I.M., Generalized Laplace invariants and the method of Darboux, {\em Duke Math. J.} {\bf89}, (1997), 351--375.

\bibitem{Ibragimov2011} 
Ibragimov N.H., Nonlinear self-adjointness and conservation laws. {\em J Phys A Math Theor} {\bf44}, (2011) :432002, 8pp.

\bibitem{Ibragimov1985} 
Ibragimov, N.H. Transformation groups applied to mathematical physics. Translated from the Russian. Mathematics and its Applications (Soviet Series). D. Reidel Publishing Co., Dordrecht, 1985. 

\bibitem{Kelbin}
Kelbin, O., Cheviakov, A. F., \& Oberlack, M. (2013). New conservation laws of helically symmetric, plane and rotationally symmetric viscous and inviscid flows. Journal of Fluid Mechanics, 721, 340-366.

\bibitem{Levi1989}
Levi, D., and Winternitz P. Non-classical symmetry reduction: example of the Boussinesq equation. {\em J. Phys A} {\bf22}, (1989), 2915--2924.

\bibitem{Lunev}
Lunev, F. A., An analogue of the Noether theorem for non-Noether and nonlocal symmetries. {\em Theor. Math. Phys.} {\bf 84}, 2 (1991), 816--820. 

\bibitem{Niet}
C. J. Nietubicz, T. H. Pulliam, and J. L. Steger.  ``Numerical Solution of the Azimuthal-Invariant Thin-Layer Navier-Stokes Equations", AIAA Journal, Vol. 18, No. 12 (1980), pp. 1411-1412.

\bibitem{Noether} 
Noether, E., Invariantevariationsprobleme, Nachr. K\"{o}nig. Gessell. Wissen. G\"{o}ttingen, Math.-Phys. {\bf Kl.} (1918), pp. 235--257.

\bibitem{Olver}
Olver, P. J. (2000). Applications of Lie groups to differential equations (Vol. 107). Springer Science \& Business Media.

\bibitem{Olver_Vorob}
Olver, P.J., \& Vorob'ev E.M. (1994).  Nonclassical and conditional symmetries, in Lie group analysis of differential equations, Vol.3, CRC Press, Boca Raton, pp. 291-328.

\bibitem{Olver87}
Olver, P. J., \& Rosenau, P. (1987). Group-invariant solutions of differential equations. SIAM Journal on Applied Mathematics, 47(2), 263-278.
%Shows there are three general conditions for obtaining reduction of a pde system under a general group transformation: (i) Lie symmetry group reduction, (ii) Bluman/Cole "nonclassical reduction", (iii) "weak symmetry" reduction.  Identifies "weak symmetries" which are called "partial symmetries" by Cicogna.


\bibitem{Olver82}
Olver, P. J. (1982). A nonlinear Hamiltonian structure for the Euler equations. Journal of Mathematical Analysis and Applications, 89(1), 233-250.
  
\bibitem{Rosen} 
Rosen J., Some properties of the Euler--Lagrange operators, {\it Preprint TAUP-269-72}, Tel-Aviv University, Tel-Aviv (1972).

\bibitem{Rosenhaus94}
Rosenhaus, V., \& Katzin, G. H. (1994). On symmetries, conservation laws, and variational problems for partial differential equations. {\em J. Math. Phys.} {\bf35} (1994), 1998--2012.

\bibitem{Rosenhaus96}
Rosenhaus, V., \& Katzin, G. H. (1996). Noether operator relation and conservation laws for partial differential equations. {\em in Proc. of First Workshop: Nonlinear Physics. Theory and Experiment, June 29-July 7, 1995, Gallipoli, Italy,} World Scientific, 1996, 286--293.

\bibitem{Rosenhaus97}
V.Rosenhaus (1997). Differential Identities, Symmetries, Conservation Laws and Inverse Variational Problems, in {\em Proc. of XXI Intern. Colloquium on Group Theoretical Methods in Physics, 15-20 July 1996, Goslar, Germany}, World Scientific, v.2, 1037--1044.

\bibitem{Rosenhaus02}
Rosenhaus, V. (2002). Infinite Symmetries and Conservation Laws. {\em J. Math. Phys.} 43, 6129-6150 (2002).

\bibitem{Rosenhaus03a}
Rosenhaus, V., On infinite symmetries and essential conservation laws for Navier-Stokes equations, {\em in Proc. XXIV Intern. Conf. Group Theor. Methods in Physics, July 15-20, 2002, Paris, France}, Inst. of Physics, Conf. Series 173 (2003) 737--740.

%\bibitem{Rosenhaus05} 
%Rosenhaus V., Essential conservation laws for equations with infinite symmetries, {\em Theor. Math. Phys.} {\bf144} (2005), 1046--1053.

\bibitem{Rosenhaus06a} 
Rosenhaus V., Boundary conditions and conserved densities for potential Zabolotskaya--Khokhlov equation, {\em J. Nonlinear Math. Phys.} {\bf13} (2006), 255--270.

\bibitem{Rosenhaus06b} 
Rosenhaus V., On conserved densities and asymptotic behavior for the potential Kadomtsev--Petviashvili equation, {\em J. Phys. A: 
Math. Gen.} {\bf39} (2006), 7693--7703.

\bibitem{Rosenhaus07}
Rosenhaus V., On infinite set of conservation laws for infinite symmetries, {\em Theor. Math. Phys.} {\bf151} (2007), 869--878.

\bibitem{Rosenhaus09}
Rosenhaus V., Infinite conservation laws for differential systems, {\em Theor. Math. Phys.} {\bf160} (2009), 1042--1049.


\bibitem{Rosenhaus13}
Rosenhaus, V., On Differential Equations with Infinite Conservation Laws, Proceedings of the 6th Intern. Workshop Group Analysis of Differential Equations and Integrable Systems, pp. 192-202, University of Cyprus, Nicosia, 2013.

\bibitem{Rosenhaus15}
Rosenhaus, V., and Shankar, R., Second Noether theorem for quasi-Noether systems, {\em J. Phys. A: Math. Gen.} {\bf49} (2016) 175205.

\bibitem{Rosenhaus16}
Rosenhaus, V., Bruzon M.S., and Gandarias, M.L. Boundary conditions for infinite conservation laws, {\em Rep. Math. Phys} {\bf78} (2016) 345--370.

\bibitem{RS2016}
Rosenhaus, V., and Shankar, R., Sub-symmetries I. Main properties and applications, submitted for publication.

%\bibitem{RS2017}
%Rosenhaus, V., and Shankar, R., Symmetries, conservation laws, and %Classification of 
%quasi-Noether systems, work in progress. %


\bibitem{Salmon88}
Salmon, R., Hamiltonian fluid mechanics. {\em Annual review of fluid mechanics} {\bf 20} (1988) (1), 225-256.
%Introduction to the Hamiltonian formalism and the use of Lagrangian coordinates in it. 


\bibitem{Sarlet1987}
Sarlet W., Cantrijn F.,and Crampin M., Pseudo-symmetries, Noether's theorem and the adjoint equation, {\em J. Phys. A} {\bf 20} (1987) 1365-1376.


\bibitem{Sheftel2004}
Sheftel M.B., Symmetry group analysis and invariant solutions of hydrodynamic-type systems, {\em Int. J. Math. Math. Sci.} (2004), 487--534.

\bibitem{Shepherd}
Shepherd, T.G., Symmetries, conservation laws, and Hamiltonian structure in geophysical fluid dynamics. Advances in Geophysics {\bf 32} (1990) 287--338.

\bibitem{Sokolov1995} 
Sokolov V.V., and Zhiber A.V., On the Darboux integrable hyperbolic equations,  {\em Phys. Lett. A}  {\bf208} (1995), 303--308.

\bibitem{Vinogradov}
Vinogradov A.M., Local Symmetries and Conservation Laws, {\em Acta Appl. Math.} {\bf 2} (1984), 21-78.

\bibitem{Vladimirov1980}
Vladimirov, V.S., and Zharinov, V.V., Closed forms associated with linear differential operators. {\em Differ. Equations} {\bf 16} (1980), 534--552.
 
\bibitem{Zharinov1986}
Zharinov, V. V.  Conservation laws of evolution systems, {\em Theor. Math. Phys.} {\bf 68}(2) (1986), 745--751.

\bibitem{Zhiber1979a} 
Zhiber A.V., and Shabat A.B., Klein--Gordon equations with a nontrivial group, (Russian)  {\em Dokl. Akad. Nauk SSSR}  {\bf247} (1979), 1103--1107., Engl. transl.  {\it Soviet Phys. Dokl.} {\bf24} (1979) 607--609. 

\bibitem{Zhiber2001} 
Zhiber A.V., and Sokolov V.V., Exactly integrable hyperbolic equations of Liouville type, {\em Russian Math. Surveys} {\bf56} (2001), 61--101.

\bibitem{Estabrook}
Wahlquist H.D., \& Estabrook, F.B. (1975). Prolongation structures of nonlinear differential equations. J.Math. Phys., 16, 1--7. 
 

\end{thebibliography}
\end{document}